\newtheorem{theorem}{Theorem}
\newtheorem{corollary}[theorem]{Corollary}
\newtheorem{definition}[theorem]{Definition}
\newtheorem{example}[theorem]{Example}
\newtheorem{lemma}[theorem]{Lemma}
\newtheorem{proposition}[theorem]{Proposition}
\newtheorem{remark}[theorem]{Remark}
\newenvironment{proof}[1][Proof]{\textbf{#1.} }{\ \rule{0.5em}{0.5em}}
\begin{document}

\title{Risk Measures on $\mathcal{P}(\mathbb{R})$ and Value At Risk with
Probability/Loss function}
\author{Marco Frittelli \\
{\small Milano University, email: marco.frittelli@unimi.it} \and Marco
Maggis \thanks{%
The author acknowledges the financial support provided by the European
Social Fund Grant.} \qquad \\
{\small Milano University, email: marco.maggis@unimi.it} \and Ilaria Peri \\
{\small Milano-Bicocca University, email: ilaria.peri@unimib.it}\qquad }
\maketitle

\begin{abstract}
We propose a generalization of the classical notion of the $V@R_{\lambda }$
that takes into account not only the probability of the losses, but the
balance between such probability and the amount of the loss. This is
obtained by defining a new class of law invariant risk measures based on an
appropriate family of acceptance sets. The $V@R_{\lambda }$ and other known
law invariant risk measures turn out to be special cases of our proposal. We
further prove the dual representation of Risk Measures on $\mathcal{P}(%
\mathbb{R}).$
\end{abstract}

\noindent \textbf{Keywords}: Value at Risk, distribution functions,
quantiles, law invariant risk measures, quasi-convex functions, dual
representation.

\noindent \textbf{MSC (2010):} primary 46N10, 91G99, 60H99; secondary 46A20,
46E30.

\noindent \textbf{Acknowledgment: }We would like to thank Prof. Fabio
Bellini, University Milano Bicocca, Samuel Drapeau, Humboldt University, as
well as an anonymous referee for helpful discussion on this subject.

\section{Introduction}


We introduce a new class of law invariant risk measures $\Phi :\mathcal{P}(%
\mathbb{R})\rightarrow \mathbb{R}\cup \{+\infty \}$ that are directly
defined on the set $\mathcal{P}(\mathbb{R})$ of probability measures on $%
\mathbb{R}$ and are monotone and quasi-convex on $\mathcal{P}(\mathbb{R})$. 

As Cherny and Madan (2009) \cite{CM09} pointed out, for a (\textit{%
translation invariant)} coherent risk measure defined on random variables,
all the positions can be spited in two classes: acceptable and not
acceptable; in contrast, for an \textit{acceptability index} there is a
whole continuum of degrees of acceptability defined by a system $\left\{ 
\mathcal{A}^{m}\right\} _{m\in \mathbb{R}}$ of sets. This formulation has
been further investigated by Drapeau and Kupper (2010) \cite{DK10} for the
quasi convex case, with emphasis on the notion of an acceptability family
and on the robust representation.

We adopt this approach and we build the maps $\Phi $ from a family $\left\{ 
\mathcal{A}^{m}\right\} _{m\in\mathbb{R}}$ of acceptance sets of
distribution functions by defining:%
\begin{equation*}
\Phi (P):=-\sup \left\{ m\in \mathbb{R}\mid P\in \mathcal{A}^{m}\right\} .
\end{equation*}

In Section 3 we study the properties of such maps, we provide some specific
examples and in particular we propose an interesting generalization of the
classical notion of $V@R_{\lambda }$.

\bigskip

The key idea of our proposal - the definition of the $\Lambda V@R$ in
Section 4 - arises from the consideration that in order to assess the risk
of a financial position it is necessary to consider not only the probability 
$\lambda $ of the loss, as in the case of the $V@R_{\lambda }$, but the
dependence between such \emph{probability} $\lambda $ and the \emph{amount}
of the loss. In other terms, a risk prudent agent is willing to accept
greater losses only with smaller probabilities. Hence, we replace the
constant $\lambda $ with a (increasing) function $\Lambda :\mathbb{%
R\rightarrow }[0,1]$ defined on losses, which we call \emph{Probability/Loss
function}. The balance between the probability and the amount of the losses
is incorporated in the definition of the family of acceptance sets 
\begin{equation*}
\mathcal{A}^{m}:=\left\{ Q\in \mathcal{P}(\mathbb{R})\mid Q(-\infty ,x]\leq
\Lambda (x),\;\forall x\leq m\right\} \text{, }m\in \mathbb{R}.
\end{equation*}%
If $P_{X}$ is the distribution function of the random variable $X,$ our new
measure is defined by:%
\begin{equation*}
\Lambda V@R(P_{X}):=-\sup \left\{ m\in \mathbb{R}\mid P(X\leq x)\leq \Lambda
(x),\;\forall x\leq m\right\} .
\end{equation*}%
As a consequence, the acceptance sets $\mathcal{A}^{m}$ are not obtained by
the translation of $\mathcal{A}^{0}$ which implies that the map is not any
more translation invariant. However, the similar property%
\begin{equation*}
\Lambda V@R(P_{X+\alpha })=\Lambda ^{\alpha }V@R(P_{X})-\alpha ,
\end{equation*}%
where $\Lambda ^{\alpha }(x)=\Lambda (x+\alpha )$, holds true and is
discussed in Section 4.

The $V@R_{\lambda }$ and the worst case risk measure are special cases of
the $\Lambda V@R$.

\bigskip

The approach of considering risk measures defined directly on the set of
distribution functions is not new and it was already adopted by Weber (2006) 
\cite{Weber}. However, in this paper we are interested in quasi-convex risk
measures based - as the above mentioned map $\Lambda V@R$ - on families of
acceptance sets of distributions and in the analysis of their robust
representation. We choose to define the risk measures on the entire set $%
\mathcal{P}(\mathbb{R})$ and not only on its subset of probabilities having
compact support, as it was done by Drapeau and Kupper (2010) \cite{DK10}.
For this, we endow $\mathcal{P}(\mathbb{R})$ with the $\sigma (\mathcal{P}(%
\mathbb{R}),C_{b}(\mathbb{R}))$ topology. The selection of this topology is
also justified by the fact (see Proposition \ref{CFB}) that for monotone
maps $\sigma (\mathcal{P}(\mathbb{R}),C_{b}(\mathbb{R}))-lsc$ is equivalent
to continuity from above. In section 5 we briefly compare the robust
representation obtained in this paper and those obtained by Cerreia-Vioglio
(2009) \cite{CV} and Drapeau and Kupper (2010) \cite{DK10}.

Except for $\Phi =+\infty $, we show that there are no \textit{convex}, $%
\sigma (\mathcal{P}(\mathbb{R}),C_{b}(\mathbb{R}))-lsc$ translation
invariant maps $\Phi :\mathcal{P}(\mathbb{R})\rightarrow \mathbb{R}\cup
\{+\infty \}$. But there are many quasi-convex and $\sigma (\mathcal{P}(%
\mathbb{R}),C_{b}(\mathbb{R}))-lsc$ maps $\Phi :\mathcal{P}(\mathbb{R}%
)\rightarrow \mathbb{R}\cup \{+\infty \}$ that in addition are monotone and
translation invariant, as for example the $V@R_{\lambda }$, the entropic
risk measure and the worst case risk measure. This is another good
motivation to adopt quasi convexity versus convexity.

Finally we provide the dual representation of quasi-convex, monotone and $%
\sigma (\mathcal{P}(\mathbb{R}),C_{b}(\mathbb{R}))-lsc$ maps $\Phi :\mathcal{%
P}(\mathbb{R})\rightarrow \mathbb{R}\cup \{+\infty \}$ - \emph{defined on
the entire set} $\mathcal{P}(\mathbb{R})$ - and compute the dual
representation of the risk measures associated to families of acceptance
sets and consequently of the $\Lambda V@R$.

\section{Law invariant Risk Measures}


Let $(\Omega ,\mathcal{F},\mathbb{P})$ be a probability space and $%
L^{0}=:L^{0}(\Omega ,\mathcal{F},\mathbb{P})$ be the space of $\mathcal{F}$
measurable random variables that are $\mathbb{P}$ almost surely finite. 
\newline
Any random variable $X\in L^{0}$ induces a probability measure $P_{X}$ on $(%
\mathbb{R}$,$\mathcal{B}_{\mathbb{R}})$ by $P_{X}(B)=\mathbb{P}(X^{-1}(B))$
for every Borel set $B\in \mathcal{B}_{\mathbb{R}}$. We refer to \cite{Ali}
Chapter 15 for a detailed study of the convex set $\mathcal{P}=:\mathcal{P}(%
\mathbb{R})$ of probability measures on $\mathbb{R}$. Here we just recall
some basic notions: for any $X\in L^{0}$ we have $P_{X}\in \mathcal{P}$ so
that we will associate to any random variable a unique element in $\mathcal{P%
}$. If $\mathbb{P}(X=x)=1$ for some $x\in \mathbb{R}$ then $P_{X}$ is the
Dirac distribution $\delta _{x}$ that concentrates the mass in the point $x$%
. \newline
A map $\rho :L\rightarrow \overline{\mathbb{R}}:=\mathbb{R}\cup \left\{
-\infty \right\} \cup \left\{ \infty \right\} $, defined on given subset $%
L\subseteq L^{0},$ is law invariant if $X,Y\in L$ and $P_{X}=P_{Y}$ implies $%
\rho (X)=\rho (Y)$.

Therefore, when considering law invariant risk measures $\rho
:L^{0}\rightarrow \overline{\mathbb{R}}$ it is natural to shift the problem
to the set $\mathcal{P}$ by defining the new map $\Phi :\mathcal{P}%
\rightarrow \overline{\mathbb{R}}$ \ as $\Phi (P_{X})=\rho (X)$. This map $%
\Phi $ is well defined on the entire $\mathcal{P}$, since there exists a
bi-injective relation between $\mathcal{P}$ and the quotient space $\frac{%
L^{0}}{\sim }$ (provided that $(\Omega ,\mathcal{F},\mathbb{P})$ supports a
random variable with uniform distribution), where the equivalence is given
by $X\sim _{\mathcal{D}}Y$ $\Leftrightarrow P_{X}=P_{Y}$. However, $\mathcal{%
P}$ is only a convex set and the usual operations on $\mathcal{P}$ are not
induced by those on $L^{0}$, namely $(P_{X}+P_{Y})(A)=P_{X}(A)+P_{Y}(A)\neq
P_{X+Y}(A)$, $A\in \mathcal{B}_{\mathbb{R}}$. 

Recall that the first order stochastic dominance on $\mathcal{P}$ is given
by: $Q\preccurlyeq P\Leftrightarrow F_{P}(x)\leq F_{Q}(x)$ for all $x\in 
\mathbb{R},$ where $F_{P}(x)=P(-\infty ,x]$ and $F_{Q}(x)=Q(-\infty ,x]$ are
the distribution functions of $P,Q\in \mathcal{P}$. Notice that $X\leq Y$ $%
\mathbb{P}$-a.s. implies $P_{X}\preccurlyeq P_{Y}$.

\begin{definition}
A Risk Measure on $\mathcal{P}(\mathbb{R} )$ is a map $\Phi :\mathcal{P}%
\rightarrow \mathbb{R}\cup \{+\infty \}$ such that:

\begin{description}
\item[(Mon)] $\Phi $ is monotone decreasing: $P\preccurlyeq Q$ implies $\Phi
(P)\geq \Phi (Q)$;

\item[(QCo)] $\Phi $ is quasi-convex: $\Phi (\lambda P+(1-\lambda )Q)\leq
\Phi (P)\vee \Phi (Q)$, $\lambda \in \lbrack 0,1].$
\end{description}
\end{definition}

Quasiconvexity can be equivalently reformulated in terms of sublevel sets: a
map $\Phi $ is quasi-convex if for every $c\in \mathbb{R}$ the set $\mathcal{%
A}_{c}=\{P\in \mathcal{P}\mid \Phi (P)\leq c\}$ is convex. As recalled in 
\cite{Weber} this notion of convexity is different from the one given for
random variables (as in \cite{FoSch}) because it does not concern
diversification of financial positions. A natural interpretation in terms of
compound lotteries is the following: whenever two probability measures $P$
and $Q$ are acceptable at some level $c$ and $\lambda \in \lbrack 0,1]$ is a
probability, then the compound lottery $\lambda P+(1-\lambda )Q$, which
randomizes over $P$ and $Q$, is also acceptable at the same level. \newline
In terms of random variables (namely $X,Y$ which induce $P_{X},P_{Y}$) the
randomized probability $\lambda P_{X}+(1-\lambda )P_{Y}$ will correspond to
some random variable $Z\neq \lambda X+(1-\lambda )Y$ so that the
diversification is realized at the level of distribution and not at the
level of portfolio selection.

\bigskip

As suggested by \cite{Weber}, we define the translation operator $T_{m}$ on
the set $\mathcal{P}(\mathbb{R})$ by: $T_{m}P(-\infty ,x]=P(-\infty ,x-m]$,
for every $m\in \mathbb{R}$. Equivalently, if $P_{X}$ is the probability
distribution of a random variable $X$ we define the translation operator as $%
T_{m}P_{X}=P_{X+m}$, $m\in \mathbb{R}$. As a consequence we map the
distribution $F_{X}(x)$ into $F_{X}(x-m)$. Notice that $P\preccurlyeq T_{m}P$
for any $m>0$.

\begin{definition}
If $\Phi :\mathcal{P}\rightarrow \mathbb{R}\cup \{+\infty \}$ is a risk
measure on $\mathcal{P}$, we say that

\begin{description}
\item[(TrI)] $\Phi $ is translation invariant if $\Phi (T_{m}P)=\Phi (P)-m$
for any $m\in \mathbb{R}.$
\end{description}
\end{definition}

Notice that (TrI) corresponds exactly to the notion of cash additivity for
risk measures defined on a space of random variables as introduced in \cite%
{ADEH}. It is well known (see \cite{CMMMa}) that for maps defined on random
variables, quasiconvexity and cash additivity imply convexity. However, in
the context of distributions (QCo) and (TrI) do not imply convexity of the
map $\Phi $, as can be shown with the simple examples of the $V@R$ and the
worst case risk measure $\rho _{w}$ (see the examples in Section 3.1).

The set $\mathcal{P}(\mathbb{R})$ spans the space $ca(\mathbb{R}):=\{\mu 
\text{ signed measure }\mid V_{\mu }<+\infty \}$ of all signed measures of
bounded variations on $\mathbb{R}$. $ca(\mathbb{R})$ (or simply $ca$)
endowed with the norm $V_{\mu }=\sup \left\{ \sum_{i=1}^{n}|\mu (A_{i})| 
\text{ s.t. } \{A_1,...,A_{n}\}\text{ partition of }\mathbb{R}\right\} $ is
a norm complete and an AL-space (see \cite{Ali} paragraph 10.11).

Let $C_{b}(\mathbb{R})$ (or simply $C_{b}$) be the space of bounded
continuous function $f:\mathbb{R}\rightarrow \mathbb{R}$. We endow $ca(%
\mathbb{R})$ with the weak$^{\ast }$ topology $\sigma (ca,C_{b})$. The dual
pairing $\langle \cdot ,\cdot \rangle :C_{b}\times ca\rightarrow \mathbb{R}$
is given by $\langle f,\mu \rangle =\int fd\mu $ and the function $\mu
\mapsto \int fd\mu $ ($\mu \in ca$) is $\sigma (ca,C_{b})$ continuous.
Notice that $\mathcal{P}$ is a $\sigma (ca,C_{b})$-closed convex subset of $%
ca$ (p. 507 in \cite{Ali}) so that $\sigma (\mathcal{P},C_{b})$ is the
relativization of $\sigma (ca,C_{b})$ to $\mathcal{P}$ and any $\sigma (%
\mathcal{P},C_{b})$-closed subset of $\mathcal{P}$ is also $\sigma
(ca,C_{b}) $-closed.

Even though $(ca,\sigma (ca,C_{b}))$ is not metrizable in general, its
subset $\mathcal{P}$ is separable and metrizable (see \cite{Ali}, Th.15.12)
and therefore when dealing with convergence in $\mathcal{P}$ we may work
with sequences instead of nets.

For every real function $F$ we denote by $\mathcal{C}(F)$ the set of points
in which the function $F$ is continuous.

\begin{theorem}
\label{weak}(\cite{Shy} Theorem 2, p.314) ) Suppose that $P_{n}$, $P\in 
\mathcal{P}$. Then $P_{n}\overset{\sigma (\mathcal{P},C_{b})}{%
\longrightarrow }P$ if and only if $F_{P_{n}}(x)\rightarrow F_{P}(x)$ for
every $x\in \mathcal{C}(F_{P})$.
\end{theorem}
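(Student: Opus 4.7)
The plan is to prove both implications by carefully bridging between integrals $\int f\,dP$ for $f\in C_{b}$ and values $F_{P}(x)$ of the distribution function, exploiting that $F_{P}$ has at most countably many discontinuities so continuity points of $F_{P}$ are dense in $\mathbb{R}$.

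For the forward direction, assume $P_{n}\overset{\sigma(\mathcal{P},C_{b})}{\longrightarrow} P$ and fix $x\in\mathcal{C}(F_{P})$. For each $\varepsilon>0$ I would construct two piecewise linear continuous bounded functions $g_{\varepsilon}^{-},g_{\varepsilon}^{+}$ sandwiching the indicator $\mathbf{1}_{(-\infty,x]}$, namely $\mathbf{1}_{(-\infty,x-\varepsilon]}\leq g_{\varepsilon}^{-}\leq\mathbf{1}_{(-\infty,x]}\leq g_{\varepsilon}^{+}\leq\mathbf{1}_{(-\infty,x+\varepsilon]}$. Since $g_{\varepsilon}^{\pm}\in C_{b}$, weak convergence gives $\int g_{\varepsilon}^{\pm}\,dP_{n}\to\int g_{\varepsilon}^{\pm}\,dP$, and sandwiching yields
\begin{equation*}
F_{P}(x-\varepsilon)\leq\liminf_{n}F_{P_{n}}(x)\leq\limsup_{n}F_{P_{n}}(x)\leq F_{P}(x+\varepsilon).
\end{equation*}
Letting $\varepsilon\downarrow 0$ and using continuity of $F_{P}$ at $x$ closes this direction.

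For the converse, assume $F_{P_{n}}(x)\to F_{P}(x)$ on $\mathcal{C}(F_{P})$ and fix $f\in C_{b}$ with $\|f\|_{\infty}\leq K$. Given $\varepsilon>0$, I would first pick $M_{1}<M_{2}$ lying in $\mathcal{C}(F_{P})$ such that $F_{P}(M_{1})<\varepsilon$ and $1-F_{P}(M_{2})<\varepsilon$; this is possible because $\mathcal{C}(F_{P})$ is co-countable. Then using uniform continuity of $f$ on $[M_{1},M_{2}]$, I would partition this interval by points $M_{1}=x_{0}<x_{1}<\cdots<x_{k}=M_{2}$ all chosen in $\mathcal{C}(F_{P})$, with mesh so fine that $|f(x)-f(x_{i-1})|<\varepsilon$ on each subinterval $(x_{i-1},x_{i}]$. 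Define the step function $f_{\varepsilon}(x)=\sum_{i=1}^{k}f(x_{i-1})\mathbf{1}_{(x_{i-1},x_{i}]}(x)$, so that $|f-f_{\varepsilon}|<\varepsilon$ on $(M_{1},M_{2}]$.

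Then $\int f_{\varepsilon}\,dP_{n}=\sum_{i=1}^{k}f(x_{i-1})\bigl(F_{P_{n}}(x_{i})-F_{P_{n}}(x_{i-1})\bigr)$, and by hypothesis each term converges to the analogous quantity for $P$, hence $\int f_{\varepsilon}\,dP_{n}\to\int f_{\varepsilon}\,dP$. Combining with the approximation bound $\bigl|\int f\,d\mu-\int f_{\varepsilon}\,d\mu\bigr|\leq\varepsilon+2K\mu\bigl((-\infty,M_{1}]\cup(M_{2},\infty)\bigr)$ applied to $\mu=P$ and $\mu=P_{n}$, together with the tail control that $P_{n}((-\infty,M_{1}])=F_{P_{n}}(M_{1})\to F_{P}(M_{1})<\varepsilon$ and the analogous bound at $M_{2}$, I obtain $\limsup_{n}\bigl|\int f\,dP_{n}-\int f\,dP\bigr|\lesssim\varepsilon(1+K)$, and letting $\varepsilon\downarrow 0$ finishes the proof.

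The main obstacle is the converse: getting the tails right requires that the cut-off points $M_{1},M_{2}$ be continuity points so that $F_{P_{n}}(M_{i})\to F_{P}(M_{i})$ can be invoked to bound the $P_{n}$-mass outside $[M_{1},M_{2}]$ uniformly in $n$ for large $n$. This tightness-through-continuity-points argument, together with choosing a partition entirely inside $\mathcal{C}(F_{P})$, is the delicate point; everything else is a routine Riemann--Stieltjes approximation.
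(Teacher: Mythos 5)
Your proof is correct. Note, however, that the paper does not prove this statement at all: it is imported verbatim from Shiryaev (\cite{Shy}, Theorem~2, p.~314), so there is no internal proof to compare against. What you have written is the classical textbook argument for the equivalence of weak convergence and pointwise convergence of distribution functions at continuity points --- sandwiching $\mathbf{1}_{(-\infty,x]}$ by continuous ramps for the forward direction, and a Riemann--Stieltjes step-function approximation with cut-offs and partition nodes chosen in the co-countable (hence dense) set $\mathcal{C}(F_{P})$ for the converse --- and both directions, including the tail control via $F_{P_{n}}(M_{i})\to F_{P}(M_{i})$, are handled correctly.
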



A sequence of probabilities $\left\{ P_{n}\right\} \subset \mathcal{P}$ is
decreasing, denoted with $P_{n}\downarrow $, if $F_{P_{n}}(x)\leq
F_{P_{n+1}}(x)$ for all $x\in \mathbb{R}$ and all $n$.

\begin{definition}
Suppose that $P_{n}$, $P\in \mathcal{P}$. We say that $P_{n}\downarrow P$
whenever $P_{n}\downarrow $ and $F_{P_{n}}(x)\uparrow F_{P}(x)$ for every $%
x\in \mathcal{C}(F_{P})$. We say that

\begin{description}
\item[(CfA)] $\Phi $ is continuous from above if $P_{n}\downarrow P$ implies 
$\Phi (P_{n})\uparrow \Phi (P).$
\end{description}
\end{definition}

\begin{proposition}
\label{CFB}Let $\Phi :\mathcal{P}\rightarrow \overline{\mathbb{R}}$ be
(Mon). Then the following are equivalent:

$\Phi $ is $\sigma (\mathcal{P},C_{b})$-lower semicontinuous

$\Phi $ is continuous from above.
\end{proposition}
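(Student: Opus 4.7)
The plan is to verify both implications directly. For ($\Rightarrow$), I would assume $\Phi$ is $\sigma(\mathcal{P},C_b)$-lsc, take $P_n \downarrow P$, and rewrite the hypothesis $F_{P_n} \le F_{P_{n+1}}$ as $P_{n+1} \preccurlyeq P_n$, so that (Mon) immediately makes $\Phi(P_n)$ nondecreasing. Next I would show $P \preccurlyeq P_n$, i.e.\ $F_{P_n}(x) \le F_P(x)$ for every $x$: on $\mathcal{C}(F_P)$ this is immediate from $F_{P_n}(x) \uparrow F_P(x)$, and I would extend it to arbitrary $x$ by approaching from the right through the dense set $\mathcal{C}(F_P)$ (whose complement is countable) and invoking right-continuity of $F_P$. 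Then (Mon) gives $\Phi(P_n) \le \Phi(P)$ for all $n$. Theorem \ref{weak} upgrades the hypothesis to $P_n \to P$ in $\sigma(\mathcal{P},C_b)$, so lsc yields $\Phi(P) \le \liminf \Phi(P_n)$, and combining the two bounds produces $\Phi(P_n) \uparrow \Phi(P)$.

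For ($\Leftarrow$), I would use metrizability of $\mathcal{P}$ to reduce to sequential lsc. Fix $c \in \mathbb{R}$ and suppose $P_n \to P$ weakly with $\Phi(P_n) \le c$; the target is $\Phi(P) \le c$. The heart of the argument is to construct a sequence $Q_n \downarrow P$ with $F_{Q_n} \le F_{P_n}$. If this succeeds, then $P_n \preccurlyeq Q_n$, hence $\Phi(Q_n) \le \Phi(P_n) \le c$ by (Mon), and (CfA) closes the argument via $\Phi(P) = \lim_n \Phi(Q_n) \le c$. I would define $Q_n$ as the probability measure with distribution function
\[
F_{Q_n}(x) := \inf_{y > x} G_n(y), \qquad G_n(y) := \inf_{k \ge n} F_{P_k}(y),
\]
that is, the right-continuous envelope of the pointwise infimum of the tail $\{F_{P_k}\}_{k \ge n}$. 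Direct verification shows $F_{Q_n}$ is nondecreasing, right-continuous, nondecreasing in $n$, bounded above by $F_{P_n}$ (using right-continuity of the latter), and tends to $0$ at $-\infty$. The required convergence $F_{Q_n}(x) \uparrow F_P(x)$ on $\mathcal{C}(F_P)$ comes from the squeeze $G_n(x) \le F_{Q_n}(x) \le F_{P_n}(x)$, since both outer terms converge to $F_P(x)$ there.

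The main obstacle is checking that $F_{Q_n}$ is genuinely the distribution function of a probability measure, specifically that $F_{Q_n}(x) \to 1$ as $x \to \infty$. This amounts to showing that taking the infimum over $\{P_k\}_{k \ge n}$ does not leak mass out to $+\infty$, which I would derive from tightness of the weakly convergent sequence $\{P_k\}$ on $\mathbb{R}$. Once the construction is secured, the rest of the argument reduces to bookkeeping with (Mon), stochastic dominance and the right-continuity of distribution functions.
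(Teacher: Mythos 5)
Your proof is correct and follows the same architecture as the paper's: the forward direction combines (Mon) with Theorem \ref{weak} and sequential lower semicontinuity, and the backward direction hinges on constructing a dominating sequence $Q_n\downarrow P$ with $P_n\preccurlyeq Q_n$ from the tail infima of the $F_{P_k}$, exactly as in the paper's Lemma \ref{LCFA} (your right-continuous envelope of $\inf_{k\geq n}F_{P_k}$ is essentially the paper's $F_{Q_n}$, minus the inessential truncation by $F_P$). The one genuine divergence is the key technical point that no mass leaks to $+\infty$, i.e. $F_{Q_n}(x)\to 1$: the paper proves this by a two-case subsequence contradiction argument occupying all of Lemma \ref{LCFA}, whereas you invoke tightness of the weakly convergent sequence $\{P_k\}$ on $\mathbb{R}$, which immediately gives $\inf_{k}F_{P_k}(M)\geq 1-\varepsilon$ for $M$ large; this is a legitimate standard fact and yields a noticeably cleaner proof of that step. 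You are also slightly more careful than the paper in the forward direction, where $P\preccurlyeq P_n$ requires extending $F_{P_n}\leq F_P$ from $\mathcal{C}(F_P)$ to all of $\mathbb{R}$ via right continuity — a point the paper's proof passes over silently.
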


\begin{proof}
Let $\Phi $ be $\sigma (\mathcal{P},C_{b})$-lower semicontinuous and suppose
that $P_{n}\downarrow P$. Then $F_{P_{n}}(x)\uparrow F_{P}(x)$ for every $%
x\in \mathcal{C}(F_{P})$ and we deduce from Theorem \ref{weak} that $P_{n}%
\overset{\sigma (\mathcal{P},C_{b})}{\longrightarrow }P$. (Mon) implies $%
\Phi (P_{n})\uparrow $ and $k:=\lim_{n}\Phi (P_{n})\leq \Phi (P)$. The lower
level set $A_{k}=\{Q\in \mathcal{P}\mid \Phi (Q)\leq k\}$ is $\sigma (%
\mathcal{P},C_{b})$ closed and, since $P_{n}\in A_{k},$ we also have $P\in
A_{k},$ i.e. $\Phi (P)=k$, and $\Phi $ is continuous from above.

Conversely, suppose that $\Phi $ is continuous from above. As $\mathcal{P}$
is metrizable we may work with sequences instead of nets. For $k\in \mathbb{R%
}$ consider $A_{k}=\{P\in \mathcal{P}\mid \Phi (P)\leq k\}$ and a sequence $%
\{P_{n}\}\subseteq A_{k}$ such that $P_{n}\overset{\sigma (\mathcal{P},C_{b})%
}{\longrightarrow }P\in \mathcal{P}$. We need to show that $P\in A_{k}.$
Lemma \ref{LCFA} shows that each $F_{Q_{n}}:=(\inf_{m\geq n}F_{P_{m}})\wedge
F_{P}$ is the distribution function of a probability measure and $%
Q_{n}\downarrow P$. From (Mon) and $P_{n}\preccurlyeq Q_{n}$, we get $\Phi
(Q_{n})\leq \Phi (P_{n}).$ From (CfA) then: $\Phi (P)=\lim_{n}\Phi
(Q_{n})\leq \liminf_{n}\Phi (P_{n})\leq k$. Thus $P\in A_{k}.$
\end{proof}

\begin{lemma}
\label{LCFA}For every $P_{n}\overset{\sigma (\mathcal{P},C_{p})}{%
\longrightarrow }P$ we have that 
\begin{equation*}
F_{Q_{n}}:=\inf_{m\geq n}F_{P_{m}}\wedge F_{P}\text{, }n\in \mathbb{N}\text{,%
}
\end{equation*}%
is a distribution function associated to a probability measure $Q_{n}$ $\in 
\mathcal{P}$ such that $Q_n\downarrow P$.
\end{lemma}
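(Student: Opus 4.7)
I would organize the proof around two tasks: (i) verifying that $F_{Q_n}$ is a bona fide distribution function, and (ii) checking that the sequence $\{Q_n\}$ satisfies $Q_n\downarrow P$ in the sense defined just above.

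For (i), monotonicity of $F_{Q_n}$ is immediate since the pointwise infimum (and the wedge with $F_P$) of non-decreasing functions is non-decreasing. The correct limit at $-\infty$ follows from $F_{Q_n}(x)\leq F_P(x)\to 0$. The limit at $+\infty$ is the first nontrivial step: since $P_n\to P$ weakly, the sequence $\{P_m\}_{m\geq n}$ is tight (relatively compact sequences in $\mathcal{P}(\mathbb{R})$ are tight by Prokhorov), so for every $\varepsilon>0$ there is $M$ with $F_{P_m}(M)>1-\varepsilon$ for all $m\geq n$; hence $\inf_{m\geq n}F_{P_m}(M)\wedge F_P(M)>1-\varepsilon$. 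Right-continuity is the other step that takes real work, and it is the one I expect to be the main obstacle; I would prove it as a general fact: if $\{f_k\}$ is a sequence of non-decreasing right-continuous functions and $h(x):=\inf_k f_k(x)$, then given $x$ and $\varepsilon>0$ pick $k_0$ with $f_{k_0}(x)<h(x)+\varepsilon/2$ and use right-continuity of $f_{k_0}$ to find $\delta>0$ such that $f_{k_0}(y)<f_{k_0}(x)+\varepsilon/2$ for $y\in[x,x+\delta)$; then $h(y)\leq f_{k_0}(y)<h(x)+\varepsilon$, and combined with monotonicity this yields $h(y)\downarrow h(x)$ as $y\downarrow x$. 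Applied to the countable family $\{F_{P_m}\}_{m\geq n}\cup\{F_P\}$, this gives the required right-continuity of $F_{Q_n}$. Thus $F_{Q_n}$ is a distribution function and defines a unique probability $Q_n\in\mathcal{P}$.

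For (ii), monotonicity in $n$ is immediate: since $\{F_{P_m}\}_{m\geq n+1}$ is a subfamily of $\{F_{P_m}\}_{m\geq n}$, we have $F_{Q_n}\leq F_{Q_{n+1}}$ pointwise, so $Q_n\preccurlyeq Q_{n+1}$, i.e.\ $Q_n\downarrow$ in the sense of the paper. Finally, for any $x\in\mathcal{C}(F_P)$, Theorem \ref{weak} gives $F_{P_m}(x)\to F_P(x)$, so
\begin{equation*}
\lim_{n}F_{Q_n}(x)=\Bigl(\liminf_{m}F_{P_m}(x)\Bigr)\wedge F_P(x)=F_P(x)\wedge F_P(x)=F_P(x),
\end{equation*}
which is precisely $F_{Q_n}(x)\uparrow F_P(x)$. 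Combining monotonicity and this pointwise limit at continuity points yields $Q_n\downarrow P$.

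In summary, the only non-routine points are the right-continuity of the infimum (handled by a standard $\varepsilon/2$ argument exploiting that the $F_{P_m}$ are individually right-continuous) and the uniform behaviour near $+\infty$ (handled via tightness of weakly convergent sequences); the rest is bookkeeping with Theorem \ref{weak}.
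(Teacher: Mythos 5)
Your proof is correct, and it diverges from the paper's in the one step that is genuinely delicate: showing $\lim_{x\rightarrow +\infty }F_{Q_{n}}(x)=1$. The paper proves this by contradiction with a bare-hands argument: assuming the limit is $\lambda <1$, it extracts continuity points $x_{k}\uparrow +\infty $ at which the infimum $\inf_{m\geq n}F_{P_{m}}(x_{k})$ is attained at some index $m(k)$, and then splits into the case where $\{m(k)\}$ is bounded (contradicting that a single $P_{\overline{m}}$ is a probability measure) and the case $\overline{\lim }_{k}m(k)=+\infty $ (contradicting weak convergence via a lower bound on $P_{m(k)}$ of a fixed half-line). You instead observe that the convergent sequence $\{P_{m}\}$ together with $P$ is relatively compact, hence tight by Prokhorov's theorem, which gives a uniform $M$ with $F_{P_{m}}(M)>1-\varepsilon $ for all $m$; this collapses the paper's page-long case analysis into two lines. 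The trade-off is that you import a nontrivial external theorem where the paper stays self-contained (though for a single weakly convergent sequence on $\mathbb{R}$ tightness is itself an easy two-line exercise using continuity points of $F_{P}$, so the dependence is harmless). Your remaining steps --- monotonicity, the limit at $-\infty $, right-continuity of a pointwise infimum of right-continuous increasing functions via the $\varepsilon /2$ argument, the inclusion of index sets giving $Q_{n}\downarrow $, and $\liminf_{m}F_{P_{m}}(x)=F_{P}(x)$ at continuity points giving $F_{Q_{n}}(x)\uparrow F_{P}(x)$ --- coincide with the paper's, with your right-continuity argument simply making explicit what the paper compresses into ``the $\inf $-operator preserves upper semicontinuity.'' One cosmetic point: for the wedge you also need $F_{P}(M)>1-\varepsilon $, which of course holds after enlarging $M$.
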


\begin{proof}
For each $n,$ $F_{Q_{n}}$ is increasing and $\lim_{x\rightarrow -\infty
}F_{Q_{n}}(x)=0.$ Moreover for real valued maps right continuity and upper
semicontinuity are equivalent. Since the $\inf $-operator preserves upper
semicontinuity we can conclude that $F_{Q_{n}}$ is right continuous for
every $n$. Now we have to show that for each $n$, $\lim_{x\rightarrow
+\infty }F_{Q_{n}}(x)=1$. By contradiction suppose that, for some $n$, $%
\lim_{x\rightarrow +\infty }F_{Q_{n}}(x)=\lambda <1$. We can choose a
sequence $\{x_{k}\}_{k}\subseteq \mathbb{R}$ with $x_{k}\in \mathcal{C}%
(F_{P})$, $x_{k}\uparrow +\infty $. In particular $F_{Q_{n}}(x_{k})\leq
\lambda $ for all $k$ and $F_{P}(x_{k})>\lambda $ definitively, say for all $%
k\geq k_{0}$. We can observe that since $x_{k}\in \mathcal{C}(F_{P})$ we
have, for all $k\geq k_{0}$, $\inf_{m\geq
n}F_{P_{m}}(x_{k})<\lim_{m\rightarrow +\infty }F_{P_{m}}(x_{k})=F_{P}(x_{k})$%
. This means that the infimum is attained for some index $m(k)\in \mathbb{N}$%
, i.e. $\inf_{m\geq n}F_{P_{m}}(x_{k})=F_{P_{m(k)}}(x_{k})$, for all $k\geq
k_{0}$. Since $P_{m(k)}(-\infty ,x_{k}]=F_{P_{m(k)}}(x_{k})\leq \lambda $
then $P_{m(k)}(x_{k},+\infty )\geq 1-\lambda $ for $k\geq k_{0}$. We have
two possibilities. Either the set $\{m(k)\}_{k}$ is bounded or $\overline{%
\lim }_{k}m(k)=+\infty $. In the first case, we know that the number of $m(k)
$'s is finite. Among these $m(k)$'s we can find at least one $\overline{m}$
and a subsequence $\left\{ x_{h}\right\} _{h}$ of $\{x_{k}\}_{k}$ such that $%
x_{h}\uparrow +\infty $ and $P_{\overline{m}}(x_{h},+\infty )\geq 1-\lambda $
for every $h$. We then conclude that 
\begin{equation*}
\lim_{h\rightarrow +\infty }P_{\overline{m}}(x_{h},+\infty )\geq 1-\lambda
\end{equation*}%
and this is a contradiction. If $\overline{\lim }_{k}m(k)=+\infty ,$ fix $%
\overline{k}\geq k_{0}$ such that $P(x_{\overline{k}},+\infty )<1-\lambda $
and observe that for every $k>\overline{k}$ 
\begin{equation*}
P_{m(k)}(x_{\overline{k}},+\infty )\geq P_{m(k)}(x_{k},+\infty )\geq
1-\lambda .
\end{equation*}%
Take a subsequence $\left\{ m(h)\right\} _{h}$ of $\{m(k)\}_{k}$ such that $%
m(h)\uparrow +\infty $. Then: 
\begin{equation*}
\lim_{h\rightarrow \infty }\inf P_{m(h)}(x_{\overline{k}},+\infty )\geq
1-\lambda >P(x_{\overline{k}},+\infty ),
\end{equation*}%
which contradicts the weak convergence $P_{n}\overset{\sigma (\mathcal{P}%
,C_{b})}{\longrightarrow }P$. \newline
Finally notice that $F_{Q_{n}}\leq F_{P_{n}}$ and $Q_{n}\downarrow $. From $%
P_{n}\overset{\sigma (\mathcal{P},C_{b})}{\longrightarrow }P$ and the
definition of $Q_{n}$, we deduce that $F_{Q_{n}}(x)\uparrow F_{P}(x)$ for
every $x\in \mathcal{C}(F_{P})$ so that $Q_{n}\downarrow P$.
\end{proof}

\begin{example}[The certainty equivalent]
It is very simple to build risk measures on $\mathcal{P}(\mathbb{R}).$ Take
any continuous, bounded from below and strictly decreasing function $f:%
\mathbb{R}\rightarrow \mathbb{R}$. Then the map $\Phi _{f}:\mathcal{P}%
\rightarrow \mathbb{R}\cup \{+\infty \}$ defined by: 
\begin{equation}
\Phi _{f}(P):=-f^{-1}\left( \int fdP\right)  \label{ceq}
\end{equation}%
is a Risk Measure on $\mathcal{P}(\mathbb{R})$. It is also easy to check
that $\Phi _{f}$ is (CfA) and therefore $\sigma (\mathcal{P},C_{b})-$lsc
Notice that Proposition \ref{NO} will then imply that $\Phi _{f}$ can not be
convex. By selecting the function $f(x)=e^{-x}$ we obtain $\Phi _{f}(P)=\ln
\left( \int \exp \left( -x)dF_{P}(x)\right) \right) $, which is in addition
(TrI). Its associated risk measure $\rho :L^{0}\rightarrow \mathbb{R}\cup
\{+\infty \}$ defined on random variables, $\rho (X)=\Phi _{f}(P_{X})=\ln
\left( Ee^{-X}\right) ,$\ is the Entropic (convex) Risk Measure. In Section
5 we will see more examples based on this construction.
\end{example}

\section{A remarkable class of risk measures on $\mathcal{P}(\mathbb{R} )$}


Given a family $\left\{ F_{m}\right\} _{m\in \mathbb{R}}$ of functions $%
F_{m}:\mathbb{R}\rightarrow \lbrack 0,1]$, we consider the associated sets
of probability measures 
\begin{equation}
\mathcal{A}^{m}:=\{Q\in \mathcal{P}\mid F_{Q}\leq F_{m}\}  \label{am}
\end{equation}%
and the associated map $\Phi :\mathcal{P}\rightarrow \overline{\mathbb{R}}$
defined by 
\begin{equation}
\Phi (P):=-\sup \left\{ m\in \mathbb{R}\mid P\in \mathcal{A}^{m}\right\} .
\label{phi1}
\end{equation}%
We assume hereafter that for each $P\in \mathcal{P}$ there exists $m$ such
that $P\notin \mathcal{A}^{m}$ so that $\Phi :\mathcal{P}\rightarrow \mathbb{%
R}\cup \{+\infty \}.$

\bigskip 

Notice that $\Phi (P):=\inf \left\{ m\in \mathbb{R}\mid P\in A_{m}\right\} $
where $A_{m}=:\mathcal{A}^{-m}$ and $\Phi (P)$ can be interpreted as the
minimal risk acceptance level under which $P$ is still acceptable. The
following discussion will show that under suitable assumption on $\left\{
F_{m}\right\} _{m\in \mathbb{R}}$ we have that $\left\{ A_{m}\right\} _{m\in 
\mathbb{R}}$ is a risk acceptance family as defined in \cite{DK10}.

We recall from \cite{DK10} the following definition

\begin{definition}
A monotone decreasing family of sets $\left\{ \mathcal{A}^{m}\right\} _{m\in 
\mathbb{R}}$ contained in $\mathcal{P}$ is \emph{left continuous} in $m$ if 
\begin{equation*}
\mathcal{A}^{m}=:\bigcap_{\varepsilon >0}\mathcal{A}^{m-\varepsilon }
\end{equation*}%
In particular it is \emph{left continuous} if it is \emph{left continuous}
in $m$ for every $m\in \mathbb{R}$.
\end{definition}

\begin{lemma}
\label{lemma2}Let $\left\{ F_{m}\right\} _{m\in \mathbb{R}}$ be a family of
functions $F_{m}:\mathbb{R}\rightarrow \lbrack 0,1]$ and $\mathcal{A}^{m}$
be the set defined in (\ref{am}). Then:

\begin{enumerate}
\item If, for every $x\in \mathbb{R}$, $F_{\centerdot }(x)$ is decreasing
(w.r.t. $m$) then the family $\left\{ \mathcal{A}^{m}\right\} $ is monotone
decreasing: $\mathcal{A}^{m}\subseteq \mathcal{A}^{n}$ for any level $m\geq
n,$

\item For any $m$, $\mathcal{A}^{m}$ is convex and satisfies: $Q\preceq P\in 
\mathcal{A}^{m}\Rightarrow Q\in \mathcal{A}^{m}$

\item If, for every $m\in \mathbb{R}$, $F_{m}(x)$ is right continuous w.r.t. 
$x$ then $\mathcal{A}^{m}$ is $\sigma (\mathcal{P},C_{b})-$closed,

\item Suppose that, for every $x\in \mathbb{R}$, $F_{m}(x)$ is decreasing
w.r.t. $m$. If $F_{m }(x)$ is left continuous w.r.t. $m$, then the family $\{%
\mathcal{A}^{m}\}$ is left continuous.

\item Suppose that, for every $x\in \mathbb{R}$, $F_{m }(x)$ is decreasing
w.r.t. $m$ and that, for every $m\in \mathbb{R}$, $F_{m}(x)$ is right
continuous and increasing w.r.t. $x$ and $\lim_{x\rightarrow +\infty
}F_{m}(x)=1$. If the family $\{\mathcal{A}^{m}\}$ is left continuous in $m$
then $F_{m}(x)$ is left continuous in $m$.
\end{enumerate}
\end{lemma}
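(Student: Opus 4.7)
The plan is to take the five items in order, with (1)--(4) essentially direct from the defining inequality $F_Q \leq F_m$ and (5) requiring the only nontrivial construction. For (1), monotonicity of $m \mapsto F_m(x)$ immediately gives $F_Q \leq F_m \leq F_n$ whenever $m \geq n$. For (2), convexity of $\mathcal{A}^m$ is a consequence of the linearity $F_{\lambda P + (1-\lambda)Q} = \lambda F_P + (1-\lambda)F_Q$, and the stochastic-dominance closure drops out of the pointwise form of the constraint $F_Q \leq F_m$.

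For (3), I use the metrizability of $\mathcal{P}$ under $\sigma(\mathcal{P},C_b)$ and work with sequences. If $P_n \in \mathcal{A}^m$ converges to $P$, then Theorem \ref{weak} yields $F_{P_n}(x) \to F_P(x)$ at each $x \in \mathcal{C}(F_P)$, so $F_P(x) \leq F_m(x)$ on this dense set; for an arbitrary $x$ I pick $x_k \downarrow x$ in $\mathcal{C}(F_P)$ (available since $\mathcal{C}(F_P)^c$ is countable) and pass to the limit using right continuity of both $F_P$ and $F_m$. For (4), the inclusion $\mathcal{A}^m \subseteq \bigcap_{\varepsilon > 0} \mathcal{A}^{m-\varepsilon}$ is just (1), while the reverse inclusion reduces to observing that $F_P(x) \leq F_{m-\varepsilon}(x)$ for all $\varepsilon > 0$ together with left continuity of $F_{\cdot}(x)$ at $m$ forces $F_P(x) \leq F_m(x)$.

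Item (5) is the main obstacle; I argue its contrapositive. Set $L(x) := \lim_{\varepsilon \downarrow 0} F_{m-\varepsilon}(x) \geq F_m(x)$ and assume $L(x_0) > F_m(x_0)$ at some $x_0$. The goal is to exhibit $P \in \mathcal{P}$ with $F_P \leq L$ pointwise (hence $P \in \bigcap_{\varepsilon > 0} \mathcal{A}^{m-\varepsilon} = \mathcal{A}^m$ by the left continuity of the family) but $F_P(x_0) > F_m(x_0)$ (hence $P \notin \mathcal{A}^m$), producing the contradiction. Picking $\alpha$ with $F_m(x_0) < \alpha < L(x_0)$, the natural ansatz is $P := \alpha \delta_{x_0} + (1-\alpha) Q$ for some probability $Q$ on $[x_0, +\infty)$ satisfying $F_Q(x) \leq R(x) := (L(x) - \alpha)/(1 - \alpha)$. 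The delicate point is that $L$ need not itself be a distribution function, since it is not required to vanish at $-\infty$; hence $L$ cannot simply be normalized to produce $Q$. I exploit instead that $L(x) \to 1$ as $x \to +\infty$ (which follows from $L \geq F_m$ and $F_m(+\infty) = 1$), pick $y_n \uparrow +\infty$ with $R(y_n) \uparrow 1$, and build $Q = \sum_n c_n \delta_{y_n}$ by a recursion on the partial sums $S_n := \sum_{k \leq n} c_k$, e.g.\ $S_{n+1} := \tfrac{1}{2}(S_n + R(y_{n+1}))$; this keeps $S_n \leq R(y_n)$ throughout while forcing $S_n \to 1$, which is exactly what is needed to make $F_P \leq L$ globally.
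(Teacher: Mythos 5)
Your proof is correct. Items (1)--(4) follow the paper's argument essentially verbatim (the paper proves (3) by contradiction at a discontinuity point $\bar x$ rather than by taking $x_k\downarrow x$ through continuity points, but the ingredients --- right continuity of $F_Q$ and of $F_m$, density of $\mathcal{C}(F_Q)$ --- are identical). The only real divergence is in the witness measure for item (5). Both you and the paper set $L:=\inf_{\varepsilon>0}F_{m-\varepsilon}\geq F_m$, assume $L(\bar x)>F_m(\bar x)$, and derive a contradiction by exhibiting $P$ with $F_P\leq L$ (hence $P\in\bigcap_\varepsilon\mathcal{A}^{m-\varepsilon}=\mathcal{A}^m$) but $F_P(\bar x)>F_m(\bar x)$. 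The paper's witness is the truncation $F_{\bar Q}=L\,\mathbf{1}_{[\bar x,+\infty)}$, which costs the observation that an infimum of increasing right-continuous functions is right continuous (via upper semicontinuity), so that $F_{\bar Q}$ is a bona fide distribution function; your witness $\alpha\delta_{x_0}+(1-\alpha)\sum_n c_n\delta_{y_n}$ is automatically a distribution function, at the price of the recursion $S_{n+1}=\tfrac12(S_n+R(y_{n+1}))$ and the verification that $S_n\uparrow 1$ while $S_n\leq R(y_n)$. Note that your step from $S_n\leq R(y_n)$ to $F_Q\leq R$ on all of $[y_n,y_{n+1})$ uses that $R$ (equivalently $L$) is increasing in $x$; this is available from the item-(5) hypothesis that each $F_m(\cdot)$ is increasing, but it should be said explicitly. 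On balance the paper's construction is a line shorter; yours is marginally more elementary in that it never needs the right continuity of $L$.
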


\begin{proof}
1. If $Q\in \mathcal{A}^{m}$ and $m\geq n$ then $F_{Q}\leq F_{m}\leq F_{n}$,
i.e. $Q\in \mathcal{A}^{n}$.

\bigskip

2. Let $Q,P\in \mathcal{A}^{m}$ and $\lambda \in \left[ 0,1\right] $.
Consider the convex combination $\lambda Q+(1-\lambda )P$ and notice that 
\begin{equation*}
F_{\lambda Q+(1-\lambda )P}\leq F_{Q}\vee F_{P}\leq F_{m},
\end{equation*}%
as $F_{P}\leq F_{m}$ and $F_{Q}\leq F_{m}$. Then $\lambda Q+(1-\lambda )P\in 
\mathcal{A}^{m}$.

\bigskip

3. Let $Q_{n}\in A^{m}$ and $Q\in \mathcal{P}$ satisfy $Q_{n}$ $\overset{%
\sigma (\mathcal{P},C_{b})}{\rightarrow }Q$. By Theorem \ref{weak} we know
that $F_{Q_{n}}(x)\rightarrow F_{Q}(x)$ for every $x\in \mathcal{C}(F_{Q})$.
For each $n,$ $F_{Q_{n}}\leq F_{m}$ and therefore $F_{Q}(x)\leq F_{m}(x)$
for every $x\in \mathcal{C}(F_{Q})$. By contradiction, suppose that $Q\notin 
\mathcal{A}^{m}$. Then there exists $\bar{x}\notin \mathcal{C}(F_{Q})$ such
that $F_{Q}(\bar{x})>F_{m}(\bar{x})$. By right continuity of $F_{Q}$ for
every $\varepsilon >0$ we can find a right neighborhood $[\bar{x},\bar{x}%
+\delta (\varepsilon ))$ such that 
\begin{equation*}
|F_{Q}(x)-F_{Q}(\bar{x})|<\varepsilon \quad \forall \,x\in \lbrack \bar{x},%
\bar{x}+\delta (\varepsilon ))
\end{equation*}%
and we may require that $\delta (\varepsilon )\downarrow 0$ if $\varepsilon
\downarrow 0.$Notice that for each $\varepsilon >0$ we can always choose an $%
x_{\varepsilon }\in (\bar{x},\bar{x}+\delta (\varepsilon ))$ such that $%
x_{\varepsilon }\in \mathcal{C}(F_{Q})$. For such an $x_{\varepsilon }$ we
deduce that 
\begin{equation*}
F_{m}(\bar{x})<F_{Q}(\bar{x})<F_{Q}(x_{\varepsilon })+\varepsilon \leq
F_{m}(x_{\varepsilon })+\varepsilon .
\end{equation*}%
This leads to a contradiction since if $\varepsilon \downarrow 0$ we have
that $x_{\varepsilon }\downarrow \bar{x}$ and thus by right continuity of $%
F_{m}$ 
\begin{equation*}
F_{m}(\bar{x})<F_{Q}(\bar{x})\leq F_{m}(\bar{x}).
\end{equation*}%
4. By assumption we know that $F_{m-\varepsilon }(x)\downarrow F_{m}(x)$ as $%
\varepsilon \downarrow 0$, for all $x\in \mathbb{R}$. By item 1, we know
that $\mathcal{A}^{m}\subseteq \bigcap\limits_{\varepsilon >0}\mathcal{A}%
^{m-\varepsilon }$. By contradiction we suppose that the strict inclusion 
\begin{equation*}
\mathcal{A}^{m}\subset \bigcap\limits_{\varepsilon >0}\mathcal{A}%
^{m-\varepsilon }
\end{equation*}%
holds, so that there will exist $Q\in \mathcal{P}$ such that $F_{Q}\leq
F_{m-\varepsilon }$ for every $\varepsilon >0$ but $F_{Q}(\overline{x}%
)>F_{m}(\overline{x})$ for some $\overline{x}\in \mathbb{R}$. Set $\delta
=F_{Q}(\overline{x})-F_{m}(\overline{x})$ so that $F_{Q}(\overline{x})>F_{m}(%
\overline{x})+\frac{\delta }{2}$. Since $F_{m-\varepsilon }\downarrow F_{m}$
we may find $\overline{\varepsilon }>0$ such that $F_{m-\overline{%
\varepsilon }}(\overline{x})-F_{m}(\overline{x})<\frac{\delta }{2}$. Thus $%
F_{Q}(\overline{x})\leq F_{m-\varepsilon }(\overline{x})<F_{m}(\overline{x})+%
\frac{\delta }{2}$ and this is a contradiction.

5. Assume that $\mathcal{A}^{m-\varepsilon }\downarrow \mathcal{A}^{m}$.
Define $F(x):=\lim_{\varepsilon \downarrow 0}F_{m-\varepsilon
}(x)=\inf_{\varepsilon >0}F_{m-\varepsilon }(x)$ for all $x\in \mathbb{R}.$
Then $F:\mathbb{R}\rightarrow \lbrack 0,1]$ is increasing, right continuous
(since the $\inf $ preserves this property). Notice that for every $%
\varepsilon >0$ we have $F_{m-\varepsilon }\geq F\geq F_{m}$ and then $%
\mathcal{A}^{m-\varepsilon }\supseteq \{Q\in \mathcal{P}\mid F_{Q}\leq
F\}\supseteq \mathcal{A}^{m}$ and $\lim_{x\rightarrow +\infty }F(x)=1$.
Necessarily we conclude $\{Q\in \mathcal{P}\mid F_{Q}\leq F\}=\mathcal{A}%
^{m} $. By contradiction we suppose that $F(\overline{x})>F_{m}(\overline{x}%
) $ for some $\overline{x}\in \mathbb{R}$. Define $F_{\overline{Q}}:\mathbb{R%
}\rightarrow \lbrack 0,1]$ by: $F_{\overline{Q}}(x)=F(x)\mathbf{1}_{[%
\overline{x},+\infty )}(x)$. The above properties of $F$ guarantees that $F_{%
\overline{Q}}$ is a distribution function of a corresponding probability
measure $\overline{Q}\in \mathcal{P}$, and since $F_{\overline{Q}}\leq F$,
we deduce $\overline{Q}\in \mathcal{A}^{m}$, but $F_{\overline{Q}}(\overline{%
x})>F_{m}(\overline{x})$ and this is a contradiction.
\end{proof}

The following Lemma can be deduced directly from Lemma \ref{lemma2} and
Theorem 1.7 in \cite{DK10} (using the risk acceptance family $A_{m}=:%
\mathcal{A}^{-m}$, according to Definition 1.6 in the aforementioned paper).
We provide the proof for sake of completeness.

\begin{lemma}
\label{L3}Let $\left\{ F_{m}\right\} _{m\in \mathbb{R}}$ be a family of
functions $F_{m}:\mathbb{R}\rightarrow \lbrack 0,1]$ and $\Phi $ be the
associated map defined in (\ref{phi1}). Then:

\begin{enumerate}
\item The map $\Phi $ is (Mon) on $\mathcal{P}$.

\item If, for every $x\in \mathbb{R}$, $F_{\centerdot }(x)$ is decreasing
(w.r.t. $m$) then $\Phi $ is (QCo) on $\mathcal{P}$.

\item If, for every $x\in \mathbb{R}$, $F_{\centerdot }(x)$ is left
continuous and decreasing (w.r.t. $m$) and if, for every $m\in \mathbb{R}$, $%
F_{m}(\centerdot )$ is right continuous (w.r.t. $x$) then 
\begin{equation}
A_{m}:=\left\{ Q\in \mathcal{P}\mid \Phi (Q)\leq m\right\} =\mathcal{A}^{-m}%
\text{, }\forall m,  \label{am1}
\end{equation}%
and $\Phi $ is $\sigma (\mathcal{P},C_{b})-$lower-semicontinuous.
\end{enumerate}
\end{lemma}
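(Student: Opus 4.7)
The plan is to read off each property of $\Phi$ from the corresponding property of the family $\{\mathcal{A}^m\}$ already proved in Lemma \ref{lemma2}, exploiting the definition $\Phi(P)=-\sup\{m:P\in\mathcal{A}^m\}$. For (Mon), if $P\preccurlyeq Q$ then $F_Q\leq F_P$, so the chain $F_Q\leq F_P\leq F_m$ shows that $P\in\mathcal{A}^m$ implies $Q\in\mathcal{A}^m$ (the stochastic-dominance content of Lemma \ref{lemma2}.2). Consequently $\{m\mid P\in\mathcal{A}^m\}\subseteq\{m\mid Q\in\mathcal{A}^m\}$ and, taking the negative of the suprema, $\Phi(P)\geq\Phi(Q)$.

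For (QCo) I would first derive the sublevel-set identity
\[
\{P\in\mathcal{P}\mid\Phi(P)\leq c\}=\bigcap_{\varepsilon>0}\mathcal{A}^{-c-\varepsilon},\qquad c\in\mathbb{R}.
\]
Under the hypothesis that $F_{\centerdot}(x)$ is decreasing in $m$, Lemma \ref{lemma2}.1 makes the family $\{\mathcal{A}^m\}$ monotone decreasing, and this monotonicity converts ``there exists $m>-c-\varepsilon$ with $P\in\mathcal{A}^m$'' into the cleaner ``$P\in\mathcal{A}^{-c-\varepsilon}$.'' Since each $\mathcal{A}^m$ is convex by Lemma \ref{lemma2}.2, the displayed set is an intersection of convex sets, hence convex, so $\Phi$ is quasi-convex.

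For item 3 the two conclusions follow from the extra hypotheses in turn. Left continuity of $F_{\centerdot}(x)$ in $m$ together with Lemma \ref{lemma2}.4 gives $\mathcal{A}^m=\bigcap_{\varepsilon>0}\mathcal{A}^{m-\varepsilon}$; applying this at $m=-c$ and comparing with the sublevel formula above yields $A_c=\mathcal{A}^{-c}$, i.e.\ (\ref{am1}). Right continuity of $F_m(\centerdot)$ in $x$ then feeds into Lemma \ref{lemma2}.3, which makes each $\mathcal{A}^m$ a $\sigma(\mathcal{P},C_b)$-closed set; hence every sublevel set $A_c=\mathcal{A}^{-c}$ is closed and $\Phi$ is $\sigma(\mathcal{P},C_b)$-lower semicontinuous. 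The only delicate step is the equivalence $\sup\{m:P\in\mathcal{A}^m\}\geq -c\Leftrightarrow P\in\bigcap_{\varepsilon>0}\mathcal{A}^{-c-\varepsilon}$ behind the sublevel formula, where one must invoke monotonicity of the family to pass from a witness $m>-c-\varepsilon$ down to membership in $\mathcal{A}^{-c-\varepsilon}$ itself; the rest is routine bookkeeping.
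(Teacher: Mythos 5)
Your proposal is correct and follows essentially the same route as the paper: every claim is reduced to the corresponding item of Lemma \ref{lemma2}, with the forward direction of the sublevel identity handled (as it must be) by the monotonicity of the family $\{\mathcal{A}^m\}$ and the reverse direction by Lemma \ref{lemma2}.4. The only cosmetic difference is that you prove (QCo) via convexity of the sublevel sets $\bigcap_{\varepsilon>0}\mathcal{A}^{-c-\varepsilon}$ (using the equivalence stated after the definition of (QCo)) instead of the paper's direct $\varepsilon$-argument on $F_{\lambda Q_{1}+(1-\lambda)Q_{2}}$, and you then reuse that same identity to obtain (\ref{am1}), which slightly streamlines item 3.
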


\begin{proof}
1. From $P\preccurlyeq Q$ we have $F_{Q}\leq F_{P}$ and%
\begin{equation*}
\left\{ m\in 
\mathbb{R}
\mid F_{P}\leq F_{m}\right\} \subseteq \left\{ m\in 
\mathbb{R}
\mid F_{Q}\leq F_{m}\right\} ,
\end{equation*}%
which implies $\Phi (Q)\leq \Phi (P)$.

2. We show that $Q_{1},Q_{2}\in \mathcal{P}$, $\Phi (Q_{1})\leq n$ and $\Phi
(Q_{2})\leq n$ imply that $\Phi (\lambda Q_{1}+(1-\lambda )Q_{2})\leq n$,
that is 
\begin{equation*}
\sup \left\{ m\in 
\mathbb{R}
\mid F_{\lambda Q_{1}+(1-\lambda )Q_{2}}\leq F_{m}\right\} \geq -n.
\end{equation*}%
By definition of the supremum, $\forall \varepsilon >0$ $\exists m_{i}$ s.t. 
$F_{Q_{i}}\leq F_{m_{i}}$\ and $m_{i}>-\Phi (Q_{i})-\varepsilon \geq
-n-\varepsilon $. Then $F_{Q_{i}}\leq F_{m_{i}}\leq F_{-n-\varepsilon }$, as 
$\left\{ F_{m}\right\} $ is a decreasing family. Therefore $\lambda
F_{Q_{1}}+(1-\lambda )F_{Q_{2}}\leq F_{-n-\varepsilon }$ \ and $-\Phi
(\lambda Q_{1}+(1-\lambda )Q_{2}\lambda )\geq -n-\varepsilon $. As this
holds for any $\varepsilon >0$, we conclude that $\Phi $ is quasi-convex.

3. The fact that $\mathcal{A}^{-m}\subseteq A_{m}$ follows directly from the
definition of $\Phi ,$ as if $Q\in \mathcal{A}^{-m}$ 
\begin{equation*}
\Phi (Q):=-\sup \left\{ n\in \mathbb{R}\mid Q\in \mathcal{A}^{n}\right\}
=\inf \left\{ n\in \mathbb{R}\mid Q\in \mathcal{A}^{-n}\right\} \leq m.
\end{equation*}%
We have to show that $A_{m}\subseteq \mathcal{A}^{-m}$. Let $Q\in A_{m}$.
Since $\Phi (Q)\leq m$, for all $\varepsilon >0$ there exists $m_{0}$ such
that $m+\varepsilon >-m_{0}$ and $F_{Q}\leq F_{m_{0}}.$ Since $F_{\centerdot
}(x)$ is decreasing (w.r.t. $m$) we have that $F_{Q}\leq F_{-m-\varepsilon }$%
, therefore $Q\in \mathcal{A}^{-m-\varepsilon }$ for any $\varepsilon >0$.
By the left continuity in $m$ of $F_{\centerdot }(x),$ we know that$\{%
\mathcal{A}^{m}\}$ is left continuous (Lemma \ref{lemma2}, item 4) and so: $%
Q\in \bigcap\limits_{\epsilon >0}\mathcal{A}^{-m-\varepsilon }=\mathcal{A}%
^{-m}$.

From the assumption that $F_{m}(\centerdot )$ is right continuous (w.r.t. $x$%
) and Lemma \ref{lemma2} item 3, we already know that $\mathcal{A}^{m}$ is $%
\sigma (\mathcal{P},C_{b})-$closed, for any $m\in 
\mathbb{R}
$, and therefore the lower level sets $A_{m}=\mathcal{A}^{-m}$ are $\sigma (%
\mathcal{P},C_{b})-$closed and $\Phi $ is $\sigma (\mathcal{P},C_{b})-$%
lower-semicontinuous.
\end{proof}

\begin{definition}
A family $\left\{ F_{m}\right\} _{m\in \mathbb{R}}$ of functions $F_{m}:%
\mathbb{R}\rightarrow \lbrack 0,1]$ is \emph{feasible} if

\begin{itemize}
\item For any $P\in \mathcal{P}$ there exists $m$ such that $P\notin 
\mathcal{A}^{m}$

\item For every $m\in \mathbb{R}$, $F_{m}(\centerdot )$ is right continuous
(w.r.t. $x$)

\item For every $x\in \mathbb{R}$, $F_{\centerdot }(x)$ is decreasing and
left continuous (w.r.t. $m$).
\end{itemize}
\end{definition}

From Lemmas \ref{lemma2} and \ref{L3} we immediately deduce:

\begin{proposition}
\label{feasible}Let $\left\{ F_{m}\right\} _{m\in \mathbb{R}}$ be a feasible
family. Then the associated family $\left\{ \mathcal{A}^{m}\right\} _{m\in 
\mathbb{R}}$ is monotone decreasing and left continuous and each set $%
\mathcal{A}^{m}$ is convex and $\sigma (\mathcal{P},C_{b})-$closed. The
associated map $\Phi :\mathcal{P}\rightarrow \mathbb{R}\cup \{+\infty \}$ is
well defined, (Mon), (Qco) and $\sigma (\mathcal{P},C_{b})-$lsc
\end{proposition}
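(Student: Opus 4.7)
The proposition is presented as an immediate corollary of Lemmas \ref{lemma2} and \ref{L3}, so my plan is simply to audit the three bullets in the definition of a feasible family and check that each one activates the corresponding hypothesis of one of the items of those two lemmas. There is no substantive new mathematics to do; the work is just bookkeeping.

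For the structural claims about the family $\{\mathcal{A}^{m}\}$, I would argue as follows. Monotonicity $\mathcal{A}^{m}\subseteq \mathcal{A}^{n}$ for $m\geq n$ is Lemma \ref{lemma2}(1), which requires only that $F_{\centerdot }(x)$ be decreasing in $m$ — the monotonicity half of the third feasibility bullet. Convexity (and the stability $Q\preccurlyeq P\in \mathcal{A}^{m}\Rightarrow Q\in \mathcal{A}^{m}$) of each $\mathcal{A}^{m}$ is Lemma \ref{lemma2}(2), which holds with no hypothesis at all on $\{F_{m}\}$. The $\sigma (\mathcal{P},C_{b})$-closedness of each $\mathcal{A}^{m}$ is Lemma \ref{lemma2}(3) and needs the second feasibility bullet, namely right continuity of $F_{m}(\centerdot )$ in $x$. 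Finally, left continuity of the family is Lemma \ref{lemma2}(4), whose hypothesis is exactly the full third bullet: $F_{\centerdot }(x)$ decreasing and left continuous in $m$.

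For the map $\Phi$, I would invoke Lemma \ref{L3} in the same cross-referencing spirit. Property (Mon) is Lemma \ref{L3}(1) and free of hypotheses. Property (Qco) is Lemma \ref{L3}(2) and uses only monotonicity of $F_{\centerdot }(x)$ in $m$. Finally, $\sigma (\mathcal{P},C_{b})$-lower-semicontinuity is Lemma \ref{L3}(3), whose hypotheses — left continuity and monotonicity in $m$, plus right continuity in $x$ — are precisely what the second and third feasibility bullets provide.

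The only point that is not literally quoted from a previous item is the well-definedness $\Phi : \mathcal{P}\rightarrow \mathbb{R}\cup \{+\infty\}$, i.e.\ $\Phi (P)>-\infty $ for every $P$. Here I would combine the first feasibility bullet with the monotonicity of $\{\mathcal{A}^{m}\}$ just established: given $P\in \mathcal{P}$, pick $m_{0}$ with $P\notin \mathcal{A}^{m_{0}}$; then for every $m\geq m_{0}$ one has $\mathcal{A}^{m}\subseteq \mathcal{A}^{m_{0}}$, so $P\notin \mathcal{A}^{m}$, and therefore $\sup \{m\in \mathbb{R}\mid P\in \mathcal{A}^{m}\}\leq m_{0}<+\infty $, giving $\Phi (P)\geq -m_{0}>-\infty $. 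This is the only step with any content, and even it is essentially a one-line observation; I do not expect any genuine obstacle in the proof.
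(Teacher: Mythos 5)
Your proposal is correct and matches the paper's treatment: the paper states that the proposition follows ``immediately'' from Lemmas \ref{lemma2} and \ref{L3}, and your item-by-item audit of which feasibility bullet activates which hypothesis is exactly that deduction made explicit. Your one-line argument for well-definedness (combining the first feasibility bullet with the monotonicity of $\{\mathcal{A}^{m}\}$ to bound $\sup\{m\mid P\in\mathcal{A}^{m}\}$ from above) correctly fills in the only detail the paper leaves implicit.
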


\begin{remark}
Let $\left\{ F_{m}\right\} _{m\in \mathbb{R}}$ be a feasible family. If
there exists an $\overline{m}$ such that $\lim_{x\rightarrow +\infty }F_{%
\overline{m}}(x)<1$ then $\lim_{x\rightarrow +\infty }F_{m}(x)<1$ for every $%
m\geq \overline{m}$ and then $\mathcal{A}^{m}=\emptyset $ for every $m\geq 
\overline{m}$. Obviously if an acceptability set is empty then it does not
contribute to the computation of the risk measure defined in (\ref{phi1}).
For this reason we will always consider w.l.o.g. a class $\left\{
F_{m}\right\} _{m\in \mathbb{R}}$ such that $\lim_{x\rightarrow +\infty
}F_{m}(x)=1$ for every $m$.
\end{remark}

\subsection{Examples}

As explained in the introduction, we define a family of risk measures
employing a Probability/Loss function $\Lambda $. Fix the \emph{right
continuous} function $\Lambda :\mathbb{R}\rightarrow \lbrack 0,1]$ and
define the family $\left\{ F_{m}\right\} _{m\in \mathbb{R}}$ of functions $%
F_{m}:\mathbb{R}\rightarrow \lbrack 0,1]$ by 
\begin{equation}
F_{m}(x):=\Lambda (x)\mathbf{1}_{(-\infty ,m)}(x)+\mathbf{1}_{[m,+\infty
)}(x).  \label{FF}
\end{equation}%
It is easy to check that if $\sup_{x\in \mathbb{R}}\Lambda (x)<1$ then the
family $\left\{ F_{m}\right\} _{m\in \mathbb{R}}$ is feasible and therefore,
by Proposition \ref{feasible}, the associated map $\Phi :\mathcal{P}%
\rightarrow \mathbb{R}\cup \{+\infty \}$ is well defined, (Mon), (Qco) and $%
\sigma (\mathcal{P},C_{b})-$lsc

\begin{example}
When $\sup_{x\in R}\Lambda (x)=1$, $\Phi $ may take the value $-\infty $.
The extreme case is when, in the definition of the family (\ref{FF}), the
function $\Lambda $ is equal to the constant one, $\Lambda (x)=1,$ and so: $%
\mathcal{A}^{m}=\mathcal{P}$ for all $m$ and $\Phi =-\infty .$
\end{example}

\begin{example}
\label{ex11}\textbf{Worst case risk measure: }$\Lambda (x)=0$\textbf{.}

Take in the definition of the family (\ref{FF}) the function $\Lambda $ to
be equal to the constant zero$:\Lambda (x)=0.$ Then: 
\begin{eqnarray*}
F_{m}(x) &:&=\mathbf{1}_{[m,+\infty )}(x) \\
\mathcal{A}^{m} &:&=\left\{ Q\in \mathcal{P}\mid F_{Q}\leq F_{m}\right\}
=\left\{ Q\in \mathcal{P}\mid \delta _{m} \preccurlyeq Q\right\} \\
\Phi _{w}(P) &:&=-\sup \left\{ m\mid P\in \mathcal{A}^{m}\right\} =-\sup
\left\{ m\mid \delta _{m}\preccurlyeq P \right\} \\
&=&-\sup \left\{ x\in \mathbb{R}\mid F_{P}(x)=0\right\}
\end{eqnarray*}%
so that, if $X\in L^{0}$ has distribution function $P_{X}$, 
\begin{equation*}
\Phi _{w}(P_{X})=-\sup \left\{ m\in \mathbb{R}\mid \delta _{m}\preccurlyeq
P_X\right\} =-ess\inf (X):=\rho _{w}(X)
\end{equation*}%
coincide with the worst case risk measure $\rho _{w}$. As the family $%
\{F_{m}\}$ is feasible, $\Phi _{w}:\mathcal{P}(\mathbb{R})\rightarrow 
\mathbb{R}\cup \{+\infty \}$ is (Mon), (Qco) and $\sigma (\mathcal{P}%
,C_{b})- $lsc In addition, it also satisfies (TrI).

Even though $\rho _{w}:L^{0}$ $\rightarrow \mathbb{R}\cup \{\infty \}$ is
convex, as a map defined on random variables, the corresponding $\Phi _{w}:%
\mathcal{P}\rightarrow \mathbb{R}\cup \{\infty \}$, as a map defined on
distribution functions, is not convex, but it is quasi-convex and concave.
Indeed, let $P\in \mathcal{P}$ and, since $F_{P}\geq 0,$ we set: 
\begin{equation*}
-\Phi _{w}(P)=\inf (F_{P}):=\sup \left\{ x\in \mathbb{R}:F_{P}(x)=0\right\} .
\end{equation*}%
If $F_{1}$, $F_{2}$ are two distribution functions corresponding to $P_{1}$, 
$P_{2}\in \mathcal{P}$ then for all $\lambda \in (0,1)$ we have: 
\begin{equation*}
\inf (\lambda F_{1}+(1-\lambda )F_{2})=\min (\inf (F_{1}),\inf (F_{2}))\leq
\lambda \inf (F_{1})+(1-\lambda )\inf (F_{2})
\end{equation*}%
and therefore, for all $\lambda \in \lbrack 0,1]$%
\begin{equation*}
\min (\inf (F_{1}),\inf (F_{2}))\leq \inf (\lambda F_{1}+(1-\lambda
)F_{2})\leq \lambda \inf (F_{1})+(1-\lambda )\inf (F_{2}).
\end{equation*}
\end{example}

\begin{example}
\label{ex12}\textbf{Value at Risk }$V@R_{\lambda }$: $\Lambda (x):=\lambda
\in (0,1).$

Take in the definition of the family (\ref{FF}) the function $\Lambda $ to
be equal to the constant $\lambda ,$ $\Lambda (x)=\lambda \in (0,1).$ Then 
\begin{eqnarray*}
F_{m}(x) &:&=\lambda \mathbf{1}_{(-\infty ,m)}(x)+\mathbf{1}_{[m,+\infty
)}(x) \\
\mathcal{A}^{m} &:&=\left\{ Q\in \mathcal{P}\mid F_{Q}\leq F_{m}\right\} \\
\Phi _{V@R_{\lambda }}(P) &:&=-\sup \left\{ m\in \mathbb{R}\mid P\in 
\mathcal{A}^{m}\right\}
\end{eqnarray*}%
If the random variable $X\in L^{0}$ has distribution function $P_{X}$ and $%
q_{X}^{+}(\lambda )=\sup \left\{ x\in \mathbb{R}\mid \mathbb{P}(X\leq x)\leq
\lambda \right\} $ is the right continuous inverse of $P_{X}$ then 
\begin{eqnarray*}
\Phi _{V@R_{\lambda }}(P_{X}) &=&-\sup \left\{ m\mid P_{X}\in \mathcal{A}%
^{m}\right\} \\
&=&-\sup \left\{ m\mid \mathbb{P}(X\leq x)\leq \lambda \;\forall x<m\right\}
\\
&=&-\sup \left\{ m\mid \mathbb{P}(X\leq m)\leq \lambda \right\} \\
&=&-q_{X}^{+}(\lambda ):=V@R_{\lambda }(X)
\end{eqnarray*}%
coincide with the Value At Risk of level $\lambda \in (0,1)$. As the family $%
\{F_{m}\}$ is feasible, $\Phi _{V@R_{\lambda }}:\mathcal{P}\rightarrow 
\mathbb{R}\cup \{+\infty \}$ is (Mon), (Qco), $\sigma (\mathcal{P},C_{b})$%
-lsc In addition, it also satisfies (TrI).

As well known, $V@R_{\lambda }:L^{0}$ $\rightarrow \mathbb{R}\cup \{\infty
\} $ is not quasi-convex, as a map defined on random variables, even though
the corresponding $\Phi _{V@R_{\lambda }}:\mathcal{P}\rightarrow \mathbb{R}%
\cup \{\infty \}$, as a map defined on distribution functions, is
quasi-convex (see \cite{DK10} for a discussion on this issue).
\end{example}


\begin{example}
Fix the family $\left\{ \Lambda _{m}\right\} _{m\in \mathbb{R}}$ of
functions $\Lambda _{m}:\mathbb{R}\rightarrow \lbrack 0,1]$ such that for
every $m\in \mathbb{R}$, $\Lambda _{m}(\centerdot )$ is right continuous
(w.r.t. $x$) and for every $x\in \mathbb{R}$, $\Lambda _{\centerdot }(x)$ is
decreasing and left continuous (w.r.t. $m$). Define the family $\left\{
F_{m}\right\} _{m\in \mathbb{R}}$ of functions $F_{m}:\mathbb{R}\rightarrow
\lbrack 0,1]$ by 
\begin{equation}
F_{m}(x):=\Lambda _{m}(x)\mathbf{1}_{(-\infty ,m)}(x)+\mathbf{1}_{[m,+\infty
)}(x).  \label{feas}
\end{equation}%
It is easy to check that if $\sup_{x\in \mathbb{R}}\Lambda _{m_{0}}(x)<1$,
for some $m_{0}\in \mathbb{R}$, then the family $\left\{ F_{m}\right\}
_{m\in \mathbb{R}}$ is feasible and therefore the associated map $\Phi :%
\mathcal{P}\rightarrow \mathbb{R}\cup \{+\infty \}$ is well defined, (Mon),
(Qco), $\sigma (\mathcal{P},C_{b})$-lsc
\end{example}

\section{On the $\Lambda V@R$}


We now propose a generalization of the $V@R_{\lambda }$ which appears useful
for possible application whenever an agent is facing some ambiguity on the
parameter $\lambda $, namely $\lambda $ is given by some uncertain value in
a confidence interval $[\lambda ^{m},\lambda ^{M}]$, with $0\leq \lambda
^{m}\leq \lambda ^{M}\leq 1$. The $V@R_{\lambda }$ corresponds to case $%
\lambda ^{m}=\lambda ^{M}$ and one typical value is $\lambda ^{M}=0,05$.

We will distinguish two possible classes of agents:

\paragraph{Risk prudent Agents}

Fix the \emph{increasing} right continuous function $\Lambda :\mathbb{R}%
\rightarrow \lbrack 0,1]$, choose as in (\ref{FF}) 
\begin{equation*}
F_{m}(x)=\Lambda (x)\mathbf{1}_{(-\infty ,m)}(x)+\mathbf{1}_{[m,+\infty )}(x)
\end{equation*}%
and set $\lambda ^{m}:=\inf \Lambda \geq 0$, $\lambda ^{M}:=\sup \Lambda
\leq 1$. As the function $\Lambda $ is increasing, we are assigning to a
lower loss a lower probability. In particular given two possible choices $%
\Lambda _{1},\Lambda _{2}$ for two different agents, the condition $\Lambda
_{1}\leq \Lambda _{2}$ means that the agent 1 is more risk prudent than
agent 2. \newline
Set, as in (\ref{am}), $\mathcal{A}^{m}=\left\{ Q\in \mathcal{P}\mid
F_{Q}\leq F_{m}\right\} $ and define as in (\ref{phi1}) 
\begin{equation*}
\Lambda V@R(P):=-\sup \left\{ m\in \mathbb{R}\mid P\in \mathcal{A}%
^{m}\right\} .
\end{equation*}%
Thus, in case of a random variable $X$ 
\begin{equation*}
\Lambda V@R(P_{X}):=-\sup \left\{ m\in \mathbb{R}\mid \mathbb{P}(X\leq
x)\leq \Lambda (x),\;\forall x\leq m\right\} .
\end{equation*}%
In particular it can be rewritten as 
\begin{equation*}
\Lambda V@R(P_{X})=-\inf \left\{ x\in \mathbb{R}\mid \mathbb{P}(X\leq
x)>\Lambda (x)\right\} .
\end{equation*}%
If both $F_{X}$ and $\Lambda $ are continuous $\Lambda V@R$ corresponds to
the smallest intersection between the two curves.

In this section, we assume that 
\begin{equation*}
\lambda ^{M}<1.
\end{equation*}%
Besides its obvious financial motivation, this request implies that the
corresponding family $F_{m}$ is feasible and so $\Lambda V@R(P)>-\infty $
for all $P\in \mathcal{P}$.

The feasibility of the family $\left\{ F_{m}\right\} $ implies that the $%
\Lambda V@R:\mathcal{P}\rightarrow \mathbb{R\cup }\left\{ \infty \right\} $
is well defined, (Mon), (QCo) and (CfA) (or equivalently $\sigma (\mathcal{P}%
,C_{b})$-lsc) map.

\begin{example}
One possible simple choice of the function $\Lambda $ is \ represented by
the step function:%
\begin{equation*}
\Lambda (x)=\lambda ^{m}\mathbf{1}_{(-\infty ,\bar{x})}(x)+\lambda ^{M}%
\mathbf{1}_{[\bar{x},+\infty )}(x)
\end{equation*}%
The idea is that with a probability of $\lambda ^{M}$ we are accepting to
loose at most $\bar{x}$. In this case we observe that: 
\begin{equation*}
\Lambda V@R(P)=\left\{ 
\begin{array}{cc}
V@R_{\lambda ^{M}}(P) & \text{if }V@R_{\lambda ^{m}}(P)\leq -\bar{x} \\ 
V@R_{\lambda ^{m}}(P) & \text{if }V@R_{\lambda ^{m}}(P)>-\bar{x}.%
\end{array}%
\right. 
\end{equation*}%
Even though the $\Lambda V@R$ is continuous from above (Proposition \ref%
{feasible} and \ref{CFB}), it may not be continuous from below, as this
example shows. For instance take $\bar{x}=0$ and $P_{X_{n}}$ induced by a
sequence of uniformly distributed random variables $X_{n}\sim U\left[
-\lambda ^{m}-\frac{1}{n},1-\lambda ^{m}-\frac{1}{n}\right] $. We have $%
P_{X_{n}}\uparrow P_{U[-\lambda ^{m},1-\lambda ^{m}]}$ but $\Lambda
V@R(P_{X_{n}})=-\frac{1}{n}$ for every $n$ and $\Lambda V@R(P_{U[-\lambda
^{m},1-\lambda ^{m}]})=\lambda ^{M}-\lambda ^{m}$.
\end{example}

\begin{remark}
(i) If $\lambda ^{m}=0$ the domain of $\Lambda V@R(P)$ is not the entire
convex set $\mathcal{P}$. We have two possible cases

\begin{itemize}
\item $supp(\Lambda )=[x^{\ast },+\infty )$: in this case $\Lambda
V@R(P)=-\inf supp(F_{P})$ for every $P\in \mathcal{P}$ such that $%
supp(F_{P})\supseteq supp(\Lambda )$.

\item $supp(\Lambda )=(-\infty ,+\infty )$: in this case 
\begin{eqnarray*}
\Lambda V@R(P)=+\infty &&\text{for all }P\text{ such that }%
\lim_{x\rightarrow -\infty }\frac{F_{P}(x)}{\Lambda (x)}>1 \\
\Lambda V@R(P)<+\infty &&\text{for all }P\text{ such that }%
\lim_{x\rightarrow -\infty }\frac{F_{P}(x)}{\Lambda (x)}<1
\end{eqnarray*}
\end{itemize}

In the case $\lim_{x\rightarrow -\infty }\frac{F_{P}(x)}{\Lambda (x)}=1$
both the previous behaviors might occur.

\noindent (ii) In case that $\lambda ^{m}>0$ then $\Lambda V@R(P)<+\infty $
for all $P\in \mathcal{P}$, so that $\Lambda V@R$ is finite valued.
\end{remark}

\bigskip

We can prove a further structural property which is the counterpart of (TrI)
for the $\Lambda V@R$. Let $\alpha \in \mathbb{R}$ any cash amount 
\begin{eqnarray*}
\Lambda V@R(P_{X+\alpha }) &=&-\sup \left\{ m\mid \mathbb{P}(X+\alpha \leq
x)\leq \Lambda (x),\;\forall x\leq m\right\} \\
&=&-\sup \left\{ m\mid \mathbb{P}(X\leq x-\alpha )\leq \Lambda (x),\;\forall
x\leq m\right\} \\
&=&-\sup \left\{ m\mid \mathbb{P}(X\leq y)\leq \Lambda (y+\alpha ),\;\forall
y\leq m-\alpha \right\} \\
&=&-\sup \left\{ m+\alpha \mid \mathbb{P}(X\leq y)\leq \Lambda (y+\alpha
),\;\forall y\leq m\right\} \\
&=&\Lambda ^{\alpha }V@R(P_{X})-\alpha
\end{eqnarray*}%
where $\Lambda ^{\alpha }(x)=\Lambda (x+\alpha )$. We may conclude that if
we add a sure positive (resp. negative) amount $\alpha $ to a risky position 
$X$ then the risk decreases (resp. increases) of the value $-\alpha $,
constrained to a lower (resp. higher) level of risk prudence described by $%
\Lambda ^{\alpha }\geq \Lambda $ (resp. $\Lambda ^{\alpha }\leq \Lambda $).
For an arbitrary $P\in \mathcal{P}$ this property can be written as 
\begin{equation*}
\Lambda V@R(T_{\alpha }P)=\Lambda ^{\alpha }V@R(P)-\alpha ,\quad \forall
\,\alpha \in \mathbb{R},
\end{equation*}%
where $T_{\alpha }P(-\infty ,x]=P(-\infty ,x-\alpha ]$.

\paragraph{Risk Seeking Agents}

Fix the \emph{decreasing} right continuous function $\Lambda :\mathbb{R}%
\rightarrow \lbrack 0,1]$, with $\inf \Lambda <1$. Similarly as above, we
define 
\begin{equation*}
F_{m}(x)=\Lambda (x)\mathbf{1}_{(-\infty ,m)}(x)+\mathbf{1}_{[m,+\infty )}(x)
\end{equation*}%
and the (Mon), (QCo) and (CfA) map 
\begin{equation*}
\Lambda V@R(P):=-\sup \left\{ m\in \mathbb{R}\mid F_{P}\leq F_{m}\right\}
=-\sup \left\{ m\in \mathbb{R}\mid \mathbb{P}(X\leq m)\leq \Lambda
(m)\right\} .
\end{equation*}%
In this case, for eventual huge losses we are allowing the highest level of
probability. As in the previous example let $\alpha \in \mathbb{R}$ and
notice that 
\begin{equation*}
\Lambda V@R(P_{X+\alpha })=\Lambda ^{\alpha }V@R(P_{X})-\alpha .
\end{equation*}%
where $\Lambda ^{\alpha }(x)=\Lambda (x+\alpha )$. The property is exactly
the same as in the former example but here the interpretation is slightly
different. If we add a sure positive (resp. negative) amount $\alpha $ to a
risky position $X$ then the risk decreases (resp. increases) of the value $%
-\alpha $, constrained to a lower (resp. higher) level of risk seeking since 
$\Lambda ^{\alpha }\leq \Lambda $ (resp. $\Lambda ^{\alpha }\geq \Lambda $).

\begin{remark}
\label{remarkDecr}For a decreasing $\Lambda ,$ there is a simpler
formulation - which will be used in Section 5.3 - of the $\Lambda V@R$ that
is obtained replacing in $F_{m}$ the function $\Lambda $ with the line $%
\Lambda (m)$ for all $x<m$. Let 
\begin{equation*}
\tilde{F}_{m}(x)=\Lambda (m)\mathbf{1}_{(-\infty ,m)}(x)+\mathbf{1}%
_{[m,+\infty )}(x).
\end{equation*}%
This family is of the type (\ref{feas}) and is feasible, provided the
function $\Lambda $ is continuous. For a decreasing $\Lambda ,$ it is
evident that 
\begin{equation*}
\Lambda V@R(P)=\Lambda \widetilde{V}@R(P):=-\sup \left\{ m\in \mathbb{R}\mid
F_{P}\leq \tilde{F}_{m}\right\} ,
\end{equation*}%
as the function $\Lambda $ lies above the line $\Lambda (m)$ for all $x\leq
m $.
\end{remark}


\section{Quasi-convex Duality}


In literature we also find several results about the dual representation of
law invariant risk measures. Kusuoka \cite{K01} contributed to the coherent
case, while Frittelli and Rosazza \cite{FR05} extended this result to the
convex case. Jouini, Schachermayer and Touzi (2006) \cite{JST06}, in the
convex case, and Svindland (2010) \cite{S10} in the quasi-convex case,
showed that every law invariant risk measure is already weakly lower
semicontinuous. Recently, Cerreia-Vioglio, Maccheroni, Marinacci and
Montrucchio (2010) \cite{CMMMa} provided a robust dual representation for
law invariant quasi-convex risk measures, which has been extended to the
dynamic case in \cite{FM09}.

In Sections 5.1 and 5.2 we will treat the general case of maps defined on $%
\mathcal{P}$, while in Section 5.3 we specialize these results to show the
dual representation of maps associated to feasible families.

\subsection{Reasons of the failure of the convex duality for Translation
Invariant maps on $\mathcal{P}$}

It is well known that the classical convex duality provided by the
Fenchel-Moreau theorem guarantees the representation of convex and lower
semicontinuous functions and therefore is very useful for the dual
representation of convex risk measures (see \cite{fr}). For any map $\Phi :%
\mathcal{P}\rightarrow \mathbb{R\cup }\left\{ \infty \right\} $ let $\Phi
^{\ast }$ be the convex conjugate: 
\begin{equation*}
\Phi ^{\ast }(f):=\sup_{Q\in \mathcal{P}}\left\{ \int fdQ-\Phi (Q)\right\} 
\text{, }f\in C_{b}.
\end{equation*}%
Applying the fact that $\mathcal{P}$ is a $\sigma (ca,C_{b})$ closed convex
subset of $ca$ one can easily check that the following version of
Fenchel-Moreau Theorem holds true for maps defined on $\mathcal{P}$.

\begin{proposition}[Fenchel-Moreau]
\label{FM}Suppose that $\Phi :\mathcal{P}\rightarrow \mathbb{R\cup }\left\{
\infty \right\} \mathbb{\ }$is $\sigma (\mathcal{P},C_{b})-$ lsc and convex.
If $Dom(\Phi ):=\left\{ Q\in \mathcal{P}\mid \Phi (Q)<+\infty \right\} \neq
\varnothing $ then $Dom(\Phi ^{\ast })\neq \varnothing $ and 
\begin{equation*}
\Phi (Q)=\sup_{f\in C_{b}}\left\{ \int fdQ-\Phi ^{\ast }(f)\right\} .
\end{equation*}
\end{proposition}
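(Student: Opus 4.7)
My plan is to reduce the statement to the classical Fenchel--Moreau theorem on the locally convex space $(ca(\mathbb{R}),\sigma(ca,C_b))$, using the fact (already recorded in the paper) that $\mathcal{P}$ is a $\sigma(ca,C_b)$-closed convex subset of $ca$ and that $C_b$ is a separating dual of $ca$, so that $(ca,C_b)$ forms a dual pair whose topological dual for $\sigma(ca,C_b)$ is exactly the set of functionals $\mu\mapsto\int f\,d\mu$ with $f\in C_b$.

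First I would extend $\Phi$ to the whole ambient space by setting
\begin{equation*}
\widetilde{\Phi}(\mu):=\Phi(\mu)\text{ if }\mu\in\mathcal{P},\qquad
\widetilde{\Phi}(\mu):=+\infty\text{ if }\mu\in ca\setminus\mathcal{P}.
\end{equation*}
Convexity of $\widetilde{\Phi}$ on $ca$ follows from convexity of $\mathcal{P}$ together with convexity of $\Phi$ on $\mathcal{P}$. For $\sigma(ca,C_b)$-lower semicontinuity I would check that each sublevel set $\{\widetilde{\Phi}\leq c\}$ equals $\{\Phi\leq c\}\subseteq\mathcal{P}$, which is $\sigma(\mathcal{P},C_b)$-closed in $\mathcal{P}$ by hypothesis, hence (by the very definition of the relativized topology and the $\sigma(ca,C_b)$-closedness of $\mathcal{P}$) also $\sigma(ca,C_b)$-closed in $ca$. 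The assumption $Dom(\Phi)\neq\varnothing$ makes $\widetilde{\Phi}$ proper.

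Next I would invoke the classical Fenchel--Moreau theorem on the dual pair $(ca,C_b)$: a proper, convex, $\sigma(ca,C_b)$-lsc function equals its biconjugate. The Hahn--Banach step internal to that theorem (strictly separating $(\mu_0,r)$ with $r<\widetilde{\Phi}(\mu_0)$ from the closed convex epigraph) produces a continuous affine minorant $\mu\mapsto\langle f,\mu\rangle-c$ of $\widetilde{\Phi}$, which yields $\widetilde{\Phi}^*(f)\leq c<+\infty$ and hence $Dom(\widetilde{\Phi}^*)\neq\varnothing$. Since $\widetilde{\Phi}=+\infty$ off $\mathcal{P}$, the conjugate collapses to
\begin{equation*}
\widetilde{\Phi}^*(f)=\sup_{\mu\in ca}\{\langle f,\mu\rangle-\widetilde{\Phi}(\mu)\}
=\sup_{Q\in\mathcal{P}}\{\textstyle\int f\,dQ-\Phi(Q)\}=\Phi^*(f),
\end{equation*}
so $Dom(\Phi^*)\neq\varnothing$ as claimed. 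Taking biconjugates and restricting to $Q\in\mathcal{P}$ gives
\begin{equation*}
\Phi(Q)=\widetilde{\Phi}(Q)=\sup_{f\in C_b}\{\langle f,Q\rangle-\widetilde{\Phi}^*(f)\}
=\sup_{f\in C_b}\{\textstyle\int f\,dQ-\Phi^*(f)\}.
\end{equation*}

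I do not foresee a serious obstacle: everything hinges on the soft facts that $(ca,C_b)$ is a genuine dual pair (because $C_b$ separates points of $ca$, for instance by the Riesz-type identification of $ca$ with a subspace of $C_b^*$) and that $\mathcal{P}$ is $\sigma(ca,C_b)$-closed, both of which are cited from \cite{Ali}. The only subtlety worth spelling out is the translation of $\sigma(\mathcal{P},C_b)$-lower semicontinuity on $\mathcal{P}$ into $\sigma(ca,C_b)$-lower semicontinuity of the extension $\widetilde{\Phi}$ on $ca$; this is where the closedness of $\mathcal{P}$ in $ca$ is essential, because it prevents the sublevel sets from acquiring new limit points outside $\mathcal{P}$ when viewed in the larger space.
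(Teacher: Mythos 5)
Your argument is correct and is exactly the route the paper intends: the authors state the proposition with only the remark that it follows from $\mathcal{P}$ being a $\sigma(ca,C_b)$-closed convex subset of $ca$, and your extension of $\Phi$ by $+\infty$ off $\mathcal{P}$, the transfer of lower semicontinuity via sublevel sets, and the application of the classical Fenchel--Moreau theorem on the dual pair $(ca,C_b)$ fill in precisely those details. No gaps; the only point you rightly flag (that $\sigma(\mathcal{P},C_b)$-closed sets are $\sigma(ca,C_b)$-closed) is already recorded in the paper's Section 2.
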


One trivial example of a proper $\sigma (\mathcal{P},C_{b})-$lsc and convex
map on $\mathcal{P}$ is given by $Q\rightarrow \int fdQ$, for some $f\in
C_{b}$. But this map does not satisfy the (TrI) property. Indeed, we show
that in the setting of risk measures defined on $\mathcal{P}$, weakly lower
semicontinuity and convexity are incompatible with translation invariance.

\begin{proposition}
\label{NO}For any map $\Phi :\mathcal{P}\rightarrow \mathbb{R\cup }\left\{
\infty \right\} $, if there exists a sequence $\left\{ Q_{n}\right\}
_{n}\subseteq \mathcal{P}$ such that \underline{$\lim $}$_{n}\Phi
(Q_{n})=-\infty $ then $Dom(\Phi ^{\ast })=\varnothing .$
\end{proposition}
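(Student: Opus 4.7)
The plan is to exploit the boundedness of elements of $C_{b}$ to force $\Phi^{\ast}(f)=+\infty$ for every $f\in C_{b}$ as soon as $\Phi$ takes arbitrarily large negative values along some sequence in $\mathcal{P}$. The crucial observation is that the pairing $\langle f,Q\rangle =\int f\,dQ$ is uniformly bounded in $Q\in\mathcal{P}$ by $\|f\|_{\infty}$, which is a feature specific to working with probability measures paired against bounded continuous functions, rather than against a space of general measurable functions.

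First I would fix an arbitrary $f\in C_{b}$ and use $f\in C_{b}$ to get the uniform estimate $\left|\int f\,dQ\right|\leq \|f\|_{\infty}$ for every $Q\in\mathcal{P}$. Then, plugging any $Q_{n}$ from the given sequence into the definition of the conjugate,
\begin{equation*}
\Phi ^{\ast }(f)=\sup_{Q\in \mathcal{P}}\left\{ \int f\,dQ-\Phi (Q)\right\} \geq \int f\,dQ_{n}-\Phi (Q_{n})\geq -\|f\|_{\infty }-\Phi (Q_{n}).
\end{equation*}
Passing to a subsequence along which $\Phi(Q_{n_{k}})\to -\infty$ (which exists by the hypothesis $\liminf_{n}\Phi(Q_{n})=-\infty$), the right-hand side tends to $+\infty$, so $\Phi^{\ast}(f)=+\infty$. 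Since $f\in C_{b}$ was arbitrary, $\mathrm{Dom}(\Phi^{\ast})=\varnothing$.

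There is essentially no obstacle here: the argument is a one-line consequence of the boundedness of $f$. The conceptual payoff, however, is immediate in the translation-invariant setting: if $\Phi$ is (TrI) and not identically $+\infty$, picking any $P\in\mathrm{Dom}(\Phi)$ and setting $Q_{n}:=T_{n}P$ gives $\Phi(Q_{n})=\Phi(P)-n\to -\infty$, so by the proposition $\mathrm{Dom}(\Phi^{\ast})=\varnothing$. Combined with the Fenchel--Moreau representation in Proposition \ref{FM}, this would then force $\Phi\equiv +\infty$ whenever $\Phi$ is in addition convex and $\sigma(\mathcal{P},C_{b})$-lsc, explaining the failure of convex duality for translation-invariant maps on $\mathcal{P}$ announced in the section heading.
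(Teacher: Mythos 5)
Your proof is correct and follows essentially the same route as the paper: plug the sequence $Q_{n}$ into the definition of $\Phi^{\ast}(f)$, bound $\int f\,dQ_{n}$ from below using the boundedness of $f$ (the paper uses $\inf_{x}f(x)$ where you use $-\|f\|_{\infty}$), and let $\Phi(Q_{n})\to-\infty$ along a subsequence. Your explicit extraction of a subsequence from the $\liminf$ hypothesis is a small precision the paper glosses over, but the argument is the same.
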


\begin{proof}
For any $f\in C_{b}(\mathbb{R})$ 
\begin{equation*}
\Phi ^{\ast }(f)=\sup_{Q\in \mathcal{P}}\left\{ \int fdQ-\Phi (Q)\right\}
\geq \int fd(Q_{n})-\Phi (Q_{n})\geq \inf_{x\in \mathbb{R}}f(x)-\Phi (Q_{n}),
\end{equation*}%
which implies $\Phi ^{\ast }=+\infty $.
\end{proof}

\bigskip 

From Propositions (\ref{FM}) and (\ref{NO}) we immediately obtain:

\begin{corollary}
\label{cor}Let $\Phi :\mathcal{P}\rightarrow \mathbb{R\cup }\left\{ \infty
\right\} $ be $\sigma (\mathcal{P},C_{b})$-lsc, convex and not identically
equal to $+\infty $. Then $\Phi $ is not (TrI), is not cash sup additive
(i.e. it does not satisfy: $\Phi (T_{m}Q)\leq \Phi (Q)-m$ ) and \underline{$%
\lim $}$_{n}\Phi (\delta _{n})\neq -\infty $. In particular, the certainty
equivalent maps $\Phi _{f}$ defined in (\ref{ceq}) can not be convex, as
they are $\sigma (\mathcal{P},C_{b})$-lsc and $\Phi _{f}(\delta _{n})=-n$
\end{corollary}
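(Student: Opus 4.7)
The plan is to combine Propositions \ref{FM} and \ref{NO} directly, and then derive each failure by exhibiting an explicit sequence pushing $\Phi$ to $-\infty$.

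First I would apply Proposition \ref{FM}: since $\Phi$ is $\sigma(\mathcal{P},C_b)$-lsc and convex with $\mathrm{Dom}(\Phi)\neq\varnothing$, we obtain $\mathrm{Dom}(\Phi^{\ast})\neq\varnothing$. Taking the contrapositive of Proposition \ref{NO}, this forbids the existence of any sequence $\{Q_n\}\subseteq\mathcal{P}$ with $\underline{\lim}_n\Phi(Q_n)=-\infty$. This is the single structural fact that drives everything else.

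Next, for the (TrI) claim, I would fix some $Q_0\in\mathrm{Dom}(\Phi)$ (which exists by assumption) and consider the sequence $Q_n:=T_n Q_0$. If $\Phi$ were translation invariant, then $\Phi(Q_n)=\Phi(Q_0)-n\to-\infty$, contradicting the consequence of Proposition \ref{NO}. The cash-sup-additivity case is essentially identical: the inequality $\Phi(T_n Q_0)\leq\Phi(Q_0)-n$ again forces $\Phi(Q_n)\to-\infty$, which is not allowed. For the statement $\underline{\lim}_n\Phi(\delta_n)\neq-\infty$, one just applies Proposition \ref{NO} to the particular sequence of Dirac masses $\{\delta_n\}\subseteq\mathcal{P}$.

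Finally, for the certainty equivalent $\Phi_f$ defined in (\ref{ceq}), I would compute
\begin{equation*}
\Phi_f(\delta_n)=-f^{-1}\!\left(\int f\,d\delta_n\right)=-f^{-1}(f(n))=-n,
\end{equation*}
so $\underline{\lim}_n\Phi_f(\delta_n)=-\infty$. Since $\Phi_f$ was already shown to be $\sigma(\mathcal{P},C_b)$-lsc in the certainty-equivalent example, the corollary forces $\Phi_f$ to fail convexity. I expect no real obstacle here: the whole proof is a clean bookkeeping argument combining the two previous propositions with the trivial observation that (TrI) or cash-sup-additivity together with any starting point in the domain produce sequences whose $\Phi$-values diverge to $-\infty$.
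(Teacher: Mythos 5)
Your proposal is correct and is exactly the argument the paper intends: the corollary is stated as an immediate consequence of Propositions \ref{FM} and \ref{NO}, and your bookkeeping (fixing $Q_0\in\mathrm{Dom}(\Phi)$, translating it to produce a sequence with $\Phi$-values tending to $-\infty$ under (TrI) or cash sup additivity, and computing $\Phi_f(\delta_n)=-n$) fills in the omitted details in the expected way. No gaps.
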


\subsection{The dual representation}

As described in the Examples in Section 3, the $\Phi _{V@R_{\lambda }}$ and $%
\Phi _{w}$ are proper, $\sigma (ca,C_{b})-$lsc, quasi-convex (Mon) and (TrI)
maps $\Phi :\mathcal{P}\rightarrow \mathbb{R\cup }\left\{ \infty \right\} $.
Therefore, the negative result outlined in Corollary \ref{cor} for the
convex case can not be true in the quasi-convex setting.

We recall that the seminal contribution to quasi-convex duality comes from
the dual representation by Volle \cite{Volle}, which has been sharpened to a
complete quasiconvex duality by Cerreia-Vioglio et al. \cite{CMMMa} (case of
M-spaces), Cerreia-Vioglio \cite{CV} (preferences over menus) and Drapeau
and Kupper \cite{DK10} (for general topological vector spaces).

Here we replicate this result and provide the dual representation of a $%
\sigma (\mathcal{P},C_{b})$ lsc quasi-convex maps defined on the entire set $%
\mathcal{P}$. The main difference is that our map $\Phi $ is defined on a
convex subset of $ca$ and not a vector space (a similar result can be found
in \cite{DK10} for convex sets). But since $\mathcal{P}$ is $\sigma
(ca,C_{b})$-closed, the first part of the proof will match very closely the
one given by Volle. In order to achieve the dual representation of $\sigma (%
\mathcal{P},C_{b})$ lsc risk measures $\Phi :\mathcal{P}\rightarrow \mathbb{%
R\cup }\left\{ \infty \right\} $ we will impose the monotonicity assumption
of $\Phi $ and deduce that in the dual representation the supremum can be
restricted to the set 
\begin{equation*}
C_{b}^{-}=\left\{ f\in C_{b}\mid f\text{ is decreasing}\right\} .
\end{equation*}%
This is natural as the first order stochastic dominance implies (see Th.
2.70 \cite{FoSch})\ that

\begin{equation}
C_{b}^{-}=\left\{ f\in C_{b}\mid Q,P\in \mathcal{P}\text{ and }P\preccurlyeq
Q \Rightarrow \int fdQ\leq \int fdP\right\} .  \label{cb}
\end{equation}

Notice that differently from \cite{DK10} the following proposition does not
require the extension of the risk map to the entire space $ca(\mathbb{R})$.
Once the representation is obtained the uniqueness of the dual function is a
direct consequence of Theorem 2.19 in \cite{DK10} as explained by
Proposition \ref{unique}.

\begin{proposition}
\label{propvolleMon}(i) Any $\sigma (\mathcal{P},C_{b})-$lsc and
quasi-convex functional $\Phi :\mathcal{P}\rightarrow \mathbb{R}\cup \left\{
\infty \right\} $ can be represented as 
\begin{equation}
\Phi (P)=\sup_{f\in C_{b}}R\left( \int fdP,f\right)  \label{repr}
\end{equation}%
where $R:\mathbb{R}\times C_{b}\rightarrow \overline{\mathbb{R}}$ is defined
by 
\begin{equation}
R(t,f):=\inf_{Q\in \mathcal{P}}\left\{ \Phi (Q)\mid \int fdQ\geq t\right\} .
\label{122}
\end{equation}%
(ii) If in addition $\Phi $ is monotone then (\ref{repr}) holds with $C_{b}$
replaced by $C_{b}^{-}.$
\end{proposition}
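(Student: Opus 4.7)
The plan is to prove (i) by Volle's Hahn-Banach argument and then deduce (ii) by replacing the separating functional with its decreasing majorant.

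For (i), the inequality $\Phi(P) \geq \sup_{f \in C_b} R(\int f dP, f)$ is immediate, since $P$ itself is feasible in the infimum defining $R(\int f dP, f)$. For the converse, fix $c < \Phi(P)$ and consider the sublevel set $\mathcal{A}_c = \{Q \in \mathcal{P} \mid \Phi(Q) \leq c\}$: by (QCo) it is convex and by $\sigma(\mathcal{P}, C_b)$-lower semicontinuity it is closed. Since $\mathcal{P}$ itself is a $\sigma(ca, C_b)$-closed subset of $ca$, $\mathcal{A}_c$ is a closed convex subset of the locally convex space $(ca, \sigma(ca, C_b))$. As $P \notin \mathcal{A}_c$ and $\{P\}$ is compact, Hahn-Banach strict separation produces $f \in C_b$ and $\alpha \in \mathbb{R}$ with $\int f dP > \alpha \geq \int f dQ$ for every $Q \in \mathcal{A}_c$. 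Setting $t = \int f dP$, every $Q$ with $\int f dQ \geq t$ lies outside $\mathcal{A}_c$, hence satisfies $\Phi(Q) > c$; thus $R(t, f) \geq c$, and letting $c \uparrow \Phi(P)$ delivers the representation.

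For (ii), I would refine the separating $f$ to be decreasing. Given $f$ as in (i), set $\bar{f}(x) := \sup_{y \geq x} f(y)$. This is bounded and decreasing, and continuity of $f$ forces $\bar{f}$ to be continuous (right-continuity uses $f(x) = \lim_{y \to x^+} f(y) \leq \sup_{y > x} f(y)$; left-continuity is analogous), so $\bar{f} \in C_b^-$. Since $\bar{f} \geq f$ pointwise, we have $\int \bar{f} dP \geq \int f dP > \alpha$. The nontrivial step is to show $\int \bar{f} dQ \leq \alpha$ for every $Q \in \mathcal{A}_c$. Here (Mon) is crucial: if $Q \in \mathcal{A}_c$ and $Q \preccurlyeq Q'$ then $\Phi(Q') \leq \Phi(Q) \leq c$, so $\mathcal{A}_c$ is upward-closed in stochastic order.

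To exploit this upward closure, enumerate $\mathbb{Q} = \{r_n\}_{n \geq 1}$ and set $h_N(x) := \max\bigl(\{f(x)\} \cup \{f(r_n) : n \leq N,\ r_n \geq x\}\bigr)$. Then $h_N$ increases pointwise to $\bar{f}$ by density of the rationals and continuity of $f$. For each $x$ the maximum defining $h_N(x)$ is attained at some $y^*_N(x) \geq x$ which depends measurably on $x$; the pushforward $Q_N$ of $Q$ under $y^*_N$ therefore satisfies $Q \preccurlyeq Q_N$ and $\int h_N dQ = \int f dQ_N$. Upward closure forces $Q_N \in \mathcal{A}_c$, so $\int h_N dQ \leq \alpha$, and bounded convergence yields $\int \bar{f} dQ \leq \alpha$. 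Hence $\bar{f} \in C_b^-$ separates $P$ from $\mathcal{A}_c$, and the argument of (i) applied with $\bar{f}$ in place of $f$ delivers the representation with supremum restricted to $C_b^-$. I expect the measurable-selection / upward-closure estimate $\int \bar{f} dQ \leq \alpha$ to be the main obstacle; everything else is a routine adaptation of Volle's argument.
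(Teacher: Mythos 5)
Your proof is correct, and part (i) is the same Volle-type separation argument as in the paper. Part (ii) pursues the same overall strategy --- replace the separating $f$ by its decreasing envelope $\bar f(x)=\sup_{y\geq x}f(y)$ and use (Mon), via the upward closure of sublevel sets under $\preccurlyeq$, to transfer the separating inequality from $f$ to $\bar f$ --- but your implementation of the crucial estimate $\int \bar f\,dQ\leq\alpha$ is genuinely different. The paper splits into two cases: if $\bar f\equiv\sup f$ it derives a contradiction using a sequence $x_n\uparrow+\infty$ with $f(x_n)\to\sup f$ and the modified laws $F_{Q_n}=F_Q\mathbf{1}_{[x_n,\infty)}$; otherwise it identifies the (at most countably many) intervals of constancy of $\bar f$ and builds a single stochastically larger $\overline{Q}$ by freezing $F_Q$ on each such interval, so that $\int \bar f\,dQ=\int f\,d\overline{Q}$. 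You instead approximate $\bar f$ from below by the maxima $h_N$ over finitely many rationals, realize $\int h_N\,dQ$ as $\int f\,dQ_N$ for the pushforward $Q_N\succcurlyeq Q$ under a measurable argmax selection $y^*_N(x)\geq x$, and pass to the limit by monotone convergence; the degenerate case $\bar f\equiv\sup f$ is then subsumed automatically (it forces $\mathcal{A}_c=\emptyset$, for which the conclusion is vacuous). Your route avoids the case analysis and the bookkeeping with constancy intervals at the cost of the selection argument, which here is elementary since each $h_N(x)$ is a maximum over a finite, piecewise-constant candidate set; the step you flag as the main obstacle is in fact unproblematic. Both proofs are complete; yours is arguably the more uniform one.
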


\begin{proof}
We will use the fact that $\sigma (\mathcal{P},C_{b})$ is the relativization
of $\sigma (ca,C_{b})$ to the set $\mathcal{P}$. In particular the lower
level sets will be $\sigma (ca,C_{b})$-closed.

(i) By definition, for any $f\in C_{b}(\mathbb{R})$, $R\left( \int
fdP,f\right) \leq \Phi (P)$ and therefore 
\begin{equation*}
\sup_{f\in C_{b}}R\left( \int fdP,f\right) \leq \Phi (P),\quad P\in \mathcal{%
P}.
\end{equation*}%
Fix any $P\in \mathcal{P}$ and take $\varepsilon \in \mathbb{R}$ such that $%
\varepsilon >0$. Then $P$ does not belong to the $\sigma (ca,C_{b})$-closed
convex set 
\begin{equation*}
\mathcal{C}_{\varepsilon }:=\left\{ Q\in \mathcal{P}:\Phi (Q)\leq \Phi
(P)-\varepsilon \right\}
\end{equation*}
(if $\Phi (P)=+\infty $, replace the set $\mathcal{C}_{\varepsilon }$ with $%
\left\{ Q\in \mathcal{P}:\Phi (Q)\leq M\right\} ,$ for any $M$). By the Hahn
Banach theorem there exists a continuous linear functional that strongly
separates $P$ and $\mathcal{C}_{\varepsilon }$, i.e. there exists $\alpha
\in \mathbb{R}$ and $f_{\varepsilon }\in C_{b}$ such that 
\begin{equation}
\int f_{\varepsilon }dP>\alpha >\int f_{\varepsilon }dQ\quad \text{ for all }%
Q\in \mathcal{C}_{\varepsilon }\text{.}  \label{100}
\end{equation}%
Hence:%
\begin{equation}
\left\{ Q\in \mathcal{P}:\int f_{\varepsilon }dP\leq \int f_{\varepsilon
}dQ\right\} \subseteq (\mathcal{C}_{\varepsilon })^{C}=\left\{ Q\in \mathcal{%
P}:\Phi (Q)>\Phi (P)-\varepsilon \right\}  \label{finitecase}
\end{equation}%
and 
\begin{eqnarray}
\Phi (P) &\geq &\sup_{f\in C_{b}}R\left( \int fdP,f\right) \geq R\left( \int
f_{\varepsilon }dP,f_{\varepsilon }\right)  \notag \\
&=&\inf \left\{ \Phi (Q)\mid Q\in \mathcal{P}\text{ such that }\int
f_{\varepsilon }dP\leq \int f_{\varepsilon }dQ\right\}  \notag \\
&\geq &\inf \left\{ \Phi (Q)\mid Q\in \mathcal{P}\text{ satisfying }\Phi
(Q)>\Phi (P)-\varepsilon \right\} \geq \Phi (P)-\varepsilon .  \label{105}
\end{eqnarray}%
(ii) We furthermore assume that $\Phi $ is monotone. As shown in (i), for
every $\varepsilon >0$ we find $f_{\varepsilon }$ such that (\ref{100})
holds true. We claim that there exists $g_{\varepsilon }\in C_{b}^{-}$
satisfying: 
\begin{equation}
\int g_{\varepsilon }dP>\alpha >\int g_{\varepsilon }dQ\quad \text{ for all }%
Q\in \mathcal{C}_{\varepsilon }.
\end{equation}%
and then the above argument (in equations (\ref{100})-(\ref{105})) implies
the thesis.

We define the decreasing function 
\begin{equation*}
g_{\varepsilon }(x)=:\sup_{y\geq x}f_{\varepsilon }(y)\in C_{b}^{-}.
\end{equation*}

\emph{First case:} suppose that $g_{\varepsilon }(x)=\sup_{x\in \mathbb{R}%
}f_{\varepsilon }(x)=:s$. In this case there exists a sequence of $%
\{x_{n}\}_{n\in \mathbb{N}}\subseteq \mathbb{R}$ such that $x_{n}\rightarrow
+\infty $ and $f_{\varepsilon }(x_{n})\rightarrow s$, as $n\rightarrow
\infty $. Define 
\begin{equation*}
g_{n}(x)=s\mathbf{1}_{(-\infty ,x_{n}]}+f_{\varepsilon }(x)\mathbf{1}%
_{(x_{n},+\infty )}
\end{equation*}%
and notice that $s\geq g_{n}\geq f_{\varepsilon }$ and $g_{n}\uparrow s$.
For any $Q\in \mathcal{C}_{\varepsilon }$ we consider $Q_{n}$ defined by $%
F_{Q_{n}}(x)=F_{Q}(x)\mathbf{1}_{[x_{n},+\infty )}$. Since $Q\preccurlyeq
Q_n $, monotonicity of $\Phi $ implies $Q_{n}\in \mathcal{C}_{\varepsilon }$%
. Notice that 
\begin{equation}
\int g_{n}dQ-\int f_{\varepsilon }dQ_{n}=(s-f_{\varepsilon
}(x_{n}))Q(-\infty ,x_{n}]\overset{n\rightarrow +\infty }{\longrightarrow }0,%
\text{ as }n\rightarrow \infty .  \label{99}
\end{equation}%
From equation (\ref{100}) we have%
\begin{equation}
s\geq \int f_{\varepsilon }dP>\alpha >\int f_{\varepsilon }dQ_{n}\quad \text{
for all }n\in \mathbb{N}.  \label{101}
\end{equation}%
Letting $\delta =s-\alpha >0$ we obtain $s>\int f_{\varepsilon }dQ_{n}+\frac{%
\delta }{2}$. From (\ref{99}), there exists $\overline{n}\in \mathbb{N}$
such that $0\leq \int g_{n}dQ-\int f_{\varepsilon }dQ_{n}<\frac{\delta }{4}$
for every $n\geq \overline{n}.$ Therefore $\forall \,n\geq \overline{n}$ 
\begin{equation*}
s>\int f_{\varepsilon }dQ_{n}+\frac{\delta }{2}>\int g_{n}dQ-\frac{\delta }{4%
}+\frac{\delta }{2}=\int g_{n}dQ+\frac{\delta }{4}
\end{equation*}%
and this leads to a contradiction since $g_{n}\uparrow s$. So the first case
is excluded.

\emph{Second case:} suppose that $g_{\varepsilon }(x)<s$ for any $x>%
\overline{x}$. As the function $g_{\varepsilon }\in C_{b}^{-}$ is
decreasing, there will exists at most a countable sequence of intervals $%
\left\{ A_{n}\right\} _{n\geq 0}$ on which $g_{\varepsilon }$ is constant.
Set $A_{0}=(-\infty ,b_{0}),$ $A_{n}=[a_{n},b_{n})\subset \mathbb{R}$ for $%
n\geq 1$. W.l.o.g. we suppose that $A_{n}\cap A_{m}=\emptyset $ for all $%
n\neq m$ (else, we paste together the sets) and $a_{n}<a_{n+1}$ for every $%
n\geq 1$. We stress that $f_{\varepsilon }(x)=g_{\varepsilon }(x)$ on $%
D=:\bigcap_{n\geq 0}A_{n}^{C}$. For every $Q\in \mathcal{C}_{\varepsilon }$
we define the probability $\overline{Q}$ by its distribution function as 
\begin{equation*}
F_{\overline{Q}}(x)=F_{Q}(x)\mathbf{1}_{D}+\sum_{n\geq 1}F_{Q}(a_{n})\mathbf{%
1}_{[a_{n},b_{n})}.
\end{equation*}%
As before, $Q\preccurlyeq \overline{Q}$ and monotonicity of $\Phi $ implies $%
\overline{Q}\in \mathcal{C}_{\varepsilon }$. Moreover 
\begin{equation*}
\int g_{\varepsilon }dQ=\int_{D}f_{\varepsilon }dQ+f_{\varepsilon
}(b_{0})Q(A_{0})+\sum_{n\geq 1}f_{\varepsilon }(a_{n})Q(A_{n})=\int
f_{\varepsilon }d\overline{Q}.
\end{equation*}%
From $g_{\varepsilon }\geq f_{\varepsilon }$ and equation (\ref{100}) we
deduce%
\begin{equation*}
\int g_{\varepsilon }dP\geq \int f_{\varepsilon }dP>\alpha >\int
f_{\varepsilon }d\overline{Q}=\int g_{\varepsilon }dQ\quad \text{ for all }%
Q\in \mathcal{C}_{\varepsilon }.
\end{equation*}
\end{proof}

We reformulate the Proposition \ref{propvolleMon} and provide two dual
representation of $\sigma (\mathcal{P}(\mathbb{R}),C_{b})$-lsc Risk Measure $%
\Phi :\mathcal{P}(\mathbb{R})\rightarrow \mathbb{R}\cup \left\{ \infty
\right\} $ in terms of a supremum over a class of probabilistic scenarios.
Let

\begin{equation*}
\mathcal{P}_{c}(\mathbb{R})=\left\{ Q\in \mathcal{P}(\mathbb{R})\mid F_{Q}%
\text{ is continuous}\right\}.
\end{equation*}

\begin{proposition}
\label{propvolle copy(1)}Any $\sigma (\mathcal{P}(\mathbb{R}),C_{b})$-lsc
Risk Measure $\Phi :\mathcal{P}(\mathbb{R})\rightarrow \mathbb{R}\cup
\left\{ \infty \right\} $ can be represented as 
\begin{equation*}
\Phi (P)=\sup_{Q\in \mathcal{P}_{c}(\mathbb{R})}R\left( -\int
F_{Q}dP,-F_{Q}\right) .
\end{equation*}
\end{proposition}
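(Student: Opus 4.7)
The plan is to bootstrap from Proposition \ref{propvolleMon}(ii), which already provides
$$\Phi(P) = \sup_{f \in C_b^-} R\!\left(\int f\,dP,\, f\right),$$
and reduce the supremum so that only functions of the form $-F_Q$ with $Q \in \mathcal{P}_c(\mathbb{R})$ appear. One direction is trivial: for every $Q \in \mathcal{P}_c(\mathbb{R})$ the function $-F_Q$ is continuous, bounded, and non-increasing, hence belongs to $C_b^-$, which gives $\sup_{Q\in\mathcal{P}_c} R(-\int F_Q dP, -F_Q) \leq \Phi(P)$. The content lies in the reverse inequality.

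The guiding idea is that $R$ is invariant under positive affine reparametrizations of its arguments: directly from definition \eqref{122}, for every $a>0$ and $b\in\mathbb{R}$,
$$R(t,f) = R\!\left(\tfrac{t-b}{a},\,\tfrac{f-b}{a}\right),$$
since the constraint $\int f\,dQ \geq t$ is equivalent to $\int \tfrac{f-b}{a}\,dQ \geq \tfrac{t-b}{a}$. I will use this to normalize an arbitrary $f \in C_b^-$ into a negated continuous distribution function.

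Concretely, fix any non-constant $f \in C_b^-$ and set $s := \sup f = \lim_{x\to-\infty} f(x)$, $i := \inf f = \lim_{x\to+\infty} f(x)$, and $a := s - i > 0$ (the limits are attained in the extended sense because $f$ is bounded and non-increasing). Define
$$F(x) := \frac{s - f(x)}{a}.$$
Then $F$ is continuous, non-decreasing, takes values in $[0,1]$, and satisfies $\lim_{x\to-\infty}F(x)=0$ and $\lim_{x\to+\infty}F(x)=1$. Hence $F = F_Q$ for a unique $Q \in \mathcal{P}_c(\mathbb{R})$. Writing $f = s + a(-F_Q)$ and applying the affine invariance with this $a$ and $b=s$ gives
$$R\!\left(\int f\,dP,\, f\right) = R\!\left(\tfrac{\int f\,dP - s}{a},\, -F_Q\right) = R\!\left(-\!\int F_Q\,dP,\, -F_Q\right),$$
since $\int f\,dP - s = -a\int F_Q\,dP$.

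The only remaining issue is the constant case $f \equiv c$, for which $R(c,c) = \inf_{Q' \in \mathcal{P}} \Phi(Q')$; this value is dominated by $R(-\int F_Q\,dP, -F_Q)$ for any $Q \in \mathcal{P}_c$, because $P$ itself lies in the infimum set $\{Q' : \int F_Q\,dQ' \leq \int F_Q\,dP\}$, forcing $R(-\int F_Q dP, -F_Q)$ to sandwich between $\inf\Phi$ and $\Phi(P)$. Thus removing constants from the supremum does not decrease it, and combining this with the identity above yields the claimed representation. There is no genuine obstacle; the key insight is the scaling identity for $R$, which allows the explicit passage from arbitrary $f \in C_b^-$ to a continuous distribution function via a single affine rescaling.
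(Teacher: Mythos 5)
Your proof is correct and follows essentially the same route as the paper: normalize each non-constant $f\in C_{b}^{-}$ by a positive affine map to $-F_{Q}$ with $Q\in \mathcal{P}_{c}(\mathbb{R})$, exploit the invariance of $R$ under such reparametrizations of the constraint $\int f\,dQ\geq t$, and discard constant $f$ since they contribute only $\inf_{\mathcal{P}}\Phi$. If anything, your normalization directly onto the range of $-F_{Q}$ is slightly more careful than the paper's, which rescales $f$ to $[0,1]$ and implicitly also uses the shift invariance $R(t+c,f+c)=R(t,f)$ to land on $-F_{Q}$.
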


\begin{proof}
Notice that for every $f\in C_{b}^{-}$ which is constant we have $R(\int
fdP,f)=\inf_{Q\in \mathcal{P}}\Phi (Q)$. Therefore we may assume w.l.o.g.
that $f\in C_{b}^{-}$ is not constant. Then $g:=\frac{f-f(+\infty )}{%
f(-\infty )-f(+\infty )}\in C_{b}^{-}$, $\inf g=0,$ $\sup g=1$, and so: $%
g\in \left\{ -F_{Q}\mid Q\in \mathcal{P}_{c}(\mathbb{R})\right\} $. In
addition, since $\int fdQ\geq \int fdP$ iff $\int gdQ\geq \int gdP$ we
obtain from (\ref{repr}) and ii) of Proposition \ref{propvolleMon} 
\begin{equation*}
\Phi (P)=\sup_{f\in C_{b}^{-}}R\left( \int fdP,f\right) =\sup_{Q\in \mathcal{%
P}_{c}(\mathbb{R})}R\left( -\int F_{Q}dP,-F_{Q}\right) .
\end{equation*}
\end{proof}

Finally we state the dual representations for Risk Measures expressed either
in terms of the dual function $R$ as used by \cite{CMMMa}, or considering
the left continuous version of $R$ (see Lemma \ref{Lleft}) in the
formulation proposed by \cite{DK10}. If $R:\mathbb{R}\times C_{b}(\mathbb{R}%
)\rightarrow \overline{\mathbb{R}}$, the left continuous version of $R(\cdot
,f)$ is defined by: 
\begin{equation}
R^{-}(t,f):=\sup \left\{ R(s,f)\mid s<t\right\} .  \label{16}
\end{equation}

\begin{proposition}
\label{propvolle}Any $\sigma (\mathcal{P}(\mathbb{R}),C_{b})$-lsc Risk
Measure $\Phi :\mathcal{P}(\mathbb{R})\rightarrow \mathbb{R}\cup \left\{
\infty \right\} $ can be represented as 
\begin{equation}
\Phi (P)=\sup_{f\in C_{b}^{-}}R\left( \int fdP,f\right) =\sup_{f\in
C_{b}^{-}}R^{-}\left( \int fdP,f\right) .  \label{11}
\end{equation}%
The function $R^{-}(t,f)$ defined in (\ref{16}) can be written as 
\begin{equation}
R^{-}(t,f)=\inf \left\{ m\in \mathbb{R}\mid \gamma (m,f)\geq t\right\} ,
\label{1111}
\end{equation}%
where $\gamma :\mathbb{R}\times C_{b}(\mathbb{R})\rightarrow \overline{%
\mathbb{R}}$ is given by: 
\begin{equation}
\gamma (m,f):=\sup_{Q\in \mathcal{P}}\left\{ \int fdQ\mid \Phi (Q)\leq
m\right\} \text{, }m\in \mathbb{R}.  \label{123}
\end{equation}
\end{proposition}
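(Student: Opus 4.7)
My plan is to obtain the first equality in (\ref{11}) directly from Proposition \ref{propvolleMon}(ii), and to derive the second by revisiting the separation step of that proof to extract the slack afforded by the \emph{strict} Hahn--Banach separation. Since $R(\cdot,f)$ is monotone nondecreasing in $t$ (as $t$ grows, the constraint set $\{Q\mid\int f\,dQ\geq t\}$ shrinks), it follows from definition (\ref{16}) that $R^{-}(\cdot,f)\leq R(\cdot,f)$ pointwise, which already gives
\[
\sup_{f\in C_{b}^{-}}R^{-}\!\left(\int f\,dP,f\right)\leq \Phi(P).
\]

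For the reverse inequality, fix $\varepsilon>0$. The proof of Proposition \ref{propvolleMon}(ii) furnishes $f_{\varepsilon}\in C_{b}^{-}$ and $\alpha\in\mathbb{R}$ such that
\[
\int f_{\varepsilon}\,dP>\alpha>\int f_{\varepsilon}\,dQ\quad\text{for all }Q\in \mathcal{C}_{\varepsilon}.
\]
Because the left inequality is \emph{strict}, I pick $s\in(\alpha,\int f_{\varepsilon}\,dP)$; then every $Q$ with $\int f_{\varepsilon}\,dQ\geq s$ satisfies $\int f_{\varepsilon}\,dQ>\alpha$, hence $Q\notin\mathcal{C}_{\varepsilon}$ and $\Phi(Q)>\Phi(P)-\varepsilon$. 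This yields $R(s,f_{\varepsilon})\geq\Phi(P)-\varepsilon$, and since $s<\int f_{\varepsilon}\,dP$,
\[
R^{-}\!\left(\int f_{\varepsilon}\,dP,\,f_{\varepsilon}\right)\geq R(s,f_{\varepsilon})\geq \Phi(P)-\varepsilon.
\]
Taking the sup over $f\in C_{b}^{-}$ and letting $\varepsilon\downarrow 0$ closes the second equality.

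For the identity (\ref{1111}), I denote $r(t,f):=\inf\{m\mid \gamma(m,f)\geq t\}$ and prove $R^{-}(t,f)=r(t,f)$ directly from the definitions; this is the standard generalized-inverse relation between the monotone functions $t\mapsto R(t,f)$ and $m\mapsto\gamma(m,f)$. For $R^{-}(t,f)\leq r(t,f)$: given any $m$ with $\gamma(m,f)\geq t$ and any $s<t$, one has $\gamma(m,f)>s$, so there exists $Q$ with $\Phi(Q)\leq m$ and $\int f\,dQ\geq s$; hence $R(s,f)\leq m$, and passing to the sup over $s<t$ and then the inf over admissible $m$ gives the bound. For $R^{-}(t,f)\geq r(t,f)$: for any $m<r(t,f)$ we have $\gamma(m,f)<t$; pick $s\in(\gamma(m,f),t)$. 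Any $Q$ with $\int f\,dQ\geq s>\gamma(m,f)$ cannot satisfy $\Phi(Q)\leq m$ (else $Q$ would contribute to the sup defining $\gamma(m,f)$, forcing $\int f\,dQ\leq\gamma(m,f)<s$), so $\Phi(Q)>m$; thus $R(s,f)\geq m$ and, since $s<t$, $R^{-}(t,f)\geq R(s,f)\geq m$.

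The main subtlety throughout is the bookkeeping of strict versus non-strict inequalities: both the passage from (\ref{repr}) to its left-continuous version and the generalized-inverse identity hinge on opening a small gap ($s\in(\alpha,\int f_{\varepsilon}\,dP)$ in one case and $s\in(\gamma(m,f),t)$ in the other) in which the monotonicity of $R$ and $\gamma$ allows one to shuttle between the primal and dual formulations. No extension of $\Phi$ to $ca(\mathbb{R})$ is needed, since the separation argument of Proposition \ref{propvolleMon} operates directly on the $\sigma(ca,C_{b})$-closed convex subset $\mathcal{P}$.
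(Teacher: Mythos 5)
Your proof is correct, and for the second equality in (\ref{11}) it takes a genuinely different route from the paper's. The paper proves $\sup_{f}R^{-}(\int f\,dP,f)\geq \Phi(P)$ by invoking continuity from above (via Proposition \ref{CFB}): it takes $P_{n}\downarrow P$, notes that $\int f\,dP_{n}$ increases to $\int f\,dP$ for $f\in C_{b}^{-}$, and interchanges $\sup_{f}$ with $\lim_{n}$ to land back on $\Phi(P)=\lim_{n}\Phi(P_{n})$. You instead reopen the separation step of Proposition \ref{propvolleMon}(ii) and exploit the strictness of $\int f_{\varepsilon}\,dP>\alpha$ to insert $s\in(\alpha,\int f_{\varepsilon}\,dP)$ with $R(s,f_{\varepsilon})\geq\Phi(P)-\varepsilon$; this gives the left-continuous bound directly, without any appeal to (CfA), to the approximation $P_{n}\downarrow P$, or to the sup/limit interchange (and it sidesteps the paper's slightly delicate step $R^{-}(\int f\,dP,f)\geq\lim_{n}R(\int f\,dP_{n},f)$, which is only clear when $\int f\,dP_{n}$ approaches $\int f\,dP$ from strictly below). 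The trade-off is that your argument is tied to the separation mechanism, whereas the paper's version isolates the topological input in (CfA). One inherited caveat, not a new gap: when $\mathcal{C}_{\varepsilon}$ is empty or $\Phi(P)=+\infty$ the separation must be replaced by the trivial bound or by the set $\{Q:\Phi(Q)\leq M\}$, exactly as in the cited proposition. Your proof of (\ref{1111}) is essentially the paper's Lemma \ref{Lleft} (the generalized left-inverse relation between the nondecreasing functions $R(\cdot,f)$ and $\gamma(\cdot,f)$), organized as two inequalities with the same $\varepsilon$-slack bookkeeping; there is no substantive difference there.
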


\begin{proof}
Notice that $R(\cdot ,f)$ is increasing and $R\left( t,f\right) \geq
R^{-}\left( t,f\right) .$ If $f\in C_{b}^{-}$ then $P\preccurlyeq
Q\Rightarrow \int fdQ\leq \int fdP$. Therefore,%
\begin{equation*}
R^{-}\left( \int fdP,f\right) :=\sup_{s<\int fdP}R(s,f)\geq
\lim_{P_{n}\downarrow P}R(\int fdP_{n},f).
\end{equation*}%
From Proposition \ref{propvolleMon} (ii) we obtain:%
\begin{align*}
\Phi (P)& =\sup_{f\in C_{b}^{-}}R\left( \int fdP,f\right) \geq \sup_{f\in
C_{b}^{-}}R^{-}\left( \int fdP,f\right) \geq \sup_{f\in
C_{b}^{-}}\lim_{P_{n}\downarrow P}R(\int fdP_{n},f) \\
& =\lim_{P_{n}\downarrow P}\sup_{f\in C_{b}^{-}}R(\int
fdP_{n},f)=\lim_{P_{n}\downarrow P}\Phi (P_{n})=\Phi (P).
\end{align*}%
by (CfA). This proves (\ref{11}). The second statement follows from the
Lemma \ref{Lleft}.
\end{proof}

The following Lemma shows that the left continuous version of $R$ is the
left inverse of the function $\gamma $ as defined in \ref{123} (for the
definition and the properties of the left inverse we refer to \cite{FoSch}
Section A.3).

\begin{lemma}
\label{Lleft}Let $\Phi $ be any map $\Phi :\mathcal{P}(\mathbb{R}%
)\rightarrow \mathbb{R}\cup \left\{ \infty \right\} $ and $R:\mathbb{R}%
\times C_{b}(\mathbb{R})\rightarrow \overline{\mathbb{R}}$ be defined in (%
\ref{122}). The left continuous version of $R(\cdot ,f)$ can be written as: 
\begin{equation}
R^{-}(t,f):=\sup \left\{ R(s,f)\mid s<t\right\} =\inf \left\{ m\in \mathbb{R}%
\mid \gamma (m,f)\geq t\right\} ,  \label{S}
\end{equation}%
where $\gamma :\mathbb{R}\times C_{b}(\mathbb{R})\rightarrow \overline{%
\mathbb{R}}$ is given in (\ref{123}).
\end{lemma}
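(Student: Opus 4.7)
The plan is to prove the identity as the standard relationship between an increasing function and its generalized left inverse. Write $\gamma^{-}(t,f):=\inf\{m\in\mathbb{R}\mid\gamma(m,f)\geq t\}$. Observe first the monotonicity of the two functions in their first argument: if $s_{1}\leq s_{2}$ then $\{Q\mid\int fdQ\geq s_{2}\}\subseteq\{Q\mid\int fdQ\geq s_{1}\}$, so $R(\cdot,f)$ is increasing; similarly $\{Q\mid\Phi(Q)\leq m_{1}\}\subseteq\{Q\mid\Phi(Q)\leq m_{2}\}$ gives that $\gamma(\cdot,f)$ is increasing. The goal is then to establish the two inequalities $R^{-}(t,f)\leq\gamma^{-}(t,f)$ and $R^{-}(t,f)\geq\gamma^{-}(t,f)$, each one obtained by a one-line use of the defining sup/inf together with monotonicity.

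For the inequality $R^{-}(t,f)\leq\gamma^{-}(t,f)$, I would fix $s<t$ and an arbitrary $m>\gamma^{-}(t,f)$. By definition of $\gamma^{-}$ and monotonicity of $\gamma(\cdot,f)$ one gets $\gamma(m,f)\geq t>s$, so the supremum defining $\gamma(m,f)$ exceeds $s$, and hence there exists $Q\in\mathcal{P}$ with $\Phi(Q)\leq m$ and $\int fdQ>s$. This $Q$ is admissible in the infimum defining $R(s,f)$, so $R(s,f)\leq\Phi(Q)\leq m$. Taking the infimum over $m>\gamma^{-}(t,f)$ gives $R(s,f)\leq\gamma^{-}(t,f)$, and then taking the supremum over $s<t$ yields $R^{-}(t,f)\leq\gamma^{-}(t,f)$.

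For the reverse inequality I would fix $m>R^{-}(t,f)=\sup_{s<t}R(s,f)$, so $R(s,f)<m$ for every $s<t$. The definition of the infimum then provides, for each such $s$, some $Q_{s}\in\mathcal{P}$ with $\int fdQ_{s}\geq s$ and $\Phi(Q_{s})\leq m$; in particular $Q_{s}$ belongs to the feasible set for $\gamma(m,f)$, giving $\gamma(m,f)\geq\int fdQ_{s}\geq s$. Letting $s\uparrow t$ produces $\gamma(m,f)\geq t$, hence $\gamma^{-}(t,f)\leq m$, and the infimum over $m>R^{-}(t,f)$ yields $\gamma^{-}(t,f)\leq R^{-}(t,f)$.

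There is no genuine obstacle here; the argument is the textbook identification of a left-continuous increasing function with the left inverse of its own inverse. The only point requiring mild care is the bookkeeping of strict versus non-strict inequalities when extracting a witness $Q$ from a strict infimum or supremum, and the fact that $R$ and $\gamma$ take values in $\overline{\mathbb{R}}$ (with the usual conventions $\inf\emptyset=+\infty$, $\sup\emptyset=-\infty$), which is handled uniformly by the above chain of implications.
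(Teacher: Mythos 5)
Your proof is correct and follows essentially the same route as the paper's: both directions are established by extracting a witness $Q$ from the sup defining $\gamma$ (respectively the inf defining $R$) and plugging it into the other optimization, with the strict gap $s<t$ absorbing the $\varepsilon$-slack. The only cosmetic difference is that the paper proves the pointwise inequality $R(s,f)\geq\inf\{m\mid\gamma(m,f)\geq s\}$ and then invokes left continuity of both sides, whereas you approach each inequality by perturbing $m$ from above; this changes nothing of substance.
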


\begin{proof}
Let the RHS of equation (\ref{S}) be denoted by 
\begin{equation*}
S(t,f):=\inf \left\{ m\in \mathbb{R}\mid \gamma (m,f)\geq t\right\} ,\text{ }%
(t,f)\in \mathbb{R}\times C_{b}(\mathbb{R}),
\end{equation*}%
and note that $S(\cdot ,f)$ is the left inverse of the increasing function $%
\gamma (\cdot ,f)$ and therefore $S(\cdot ,f)$ is left continuous.\newline
Step I. To prove that $R^{-}(t,f)\geq S(t,f)$ it is sufficient to show that
for all $s<t$ we have:%
\begin{equation}
R(s,f)\geq S(s,f),  \label{R>=S}
\end{equation}%
Indeed, if (\ref{R>=S}) is true%
\begin{equation*}
R^{-}(t,f)=\sup_{s<t}R(s,f)\geq \sup_{s<t}S(s,f)=S(t,f),
\end{equation*}%
as both $R^{-}$ and $S$ are left continuous in the first argument.\newline
Writing explicitly the inequality (\ref{R>=S})%
\begin{equation*}
\inf_{Q\in \mathcal{P}}\left\{ \Phi (Q)\mid \int fdQ\geq s\right\} \geq \inf
\left\{ m\in \mathbb{R}\mid \gamma (m,f)\geq s\right\}
\end{equation*}%
and letting $Q\in \mathcal{P}$ satisfying $\int fdQ\geq s$, we see that it
is sufficient to show the existence of $m\in \mathbb{R}$ such that $\gamma
(m,f)\geq s$ and $m\leq \Phi (Q)$. If $\Phi (Q)=-\infty $ then $\gamma
(m,f)\geq s$ for any $m$ and therefore $S(s,f)=R(s,f)=-\infty $.

Suppose now that $\infty >\Phi (Q)>-\infty $ and define $m:=\Phi (Q).$ As $%
\int fdQ\geq s$ we have:%
\begin{equation*}
\gamma (m,f):=\sup_{Q\in \mathcal{P}}\left\{ \int fdQ\mid \Phi (Q)\leq
m\right\} \geq s
\end{equation*}%
Then $m\in \mathbb{R}$ satisfies the required conditions.

Step II : To obtain $R^{-}(t,f):=\sup_{s<t}R(s,f)\leq S(t,f)$ it is
sufficient to prove that, for all $s<t,$ $R(s,f)\leq S(t,f)$, that is%
\begin{equation}
\inf_{Q\in \mathcal{P}}\left\{ \Phi (Q)\mid \int fdQ\geq s\right\} \leq \inf
\left\{ m\in \mathbb{R}\mid \gamma (m,f)\geq t\right\} .  \label{R<=S}
\end{equation}

Fix any $s<t$\ and consider any $m\in \mathbb{R}$ such that $\gamma
(m,f)\geq t$. By the definition of $\gamma $, for all $\varepsilon >0$ there
exists $Q_{\varepsilon }\in \mathcal{P}$ such that $\Phi (Q_{\varepsilon
})\leq m$ and $\int fdQ_{\varepsilon }>t-\varepsilon .$ Take $\varepsilon $
such that $0<\varepsilon <t-s$. Then $\int fdQ_{\varepsilon }\geq s$ and $%
\Phi (Q_{\varepsilon })\leq m$ and (\ref{R<=S}) follows.
\end{proof}

\paragraph{Complete duality}

The complete duality in the class of quasi-convex monotone maps on vector
spaces was first obtained by \cite{CMMMb}. The following proposition is
based on the complete duality proved in \cite{DK10} for maps defined on
convex sets and therefore the results in \cite{DK10} apply very easily in
our setting. In order to obtain the uniqueness of the dual function in the
representation (\ref{11}) we need to introduce the opportune class $\mathcal{%
R}^{\max }$. Recall that $\mathcal{P}(\mathbb{R})$ spans the space of
countably additive signed measures on $\mathbb{R}$, namely $ca(\mathbb{R})$
and that the first stochastic order corresponds to the cone 
\begin{equation*}
\mathcal{K}=\{\mu \in ca\mid \int fd\mu \geq 0\;\forall \,f\in \mathcal{K}%
^{\circ }\}\subseteq ca_{+},
\end{equation*}%
where $\mathcal{K}^{\circ }=-C_{b}^{-}$ are the non decreasing functions $%
f\in C_{b}$.

\begin{definition}[\protect\cite{DK10}]
We denote by $\mathcal{R}^{\max }$ the class of functions $R:\mathbb{R}%
\times \mathcal{K}^{\circ }\rightarrow \overline{\mathbb{R}}$ such that: (i) 
$R$ is non decreasing and left continuous in the first argument,(ii) $R$ is
jointly quasiconcave, (iii) $R(s,\lambda f)=R(\frac{s}{\lambda },f)$ for
every $f\in \mathcal{K}^{\circ }$, $s\in \mathbb{R}$ and $\lambda >0$, (iv) $%
\lim_{s\rightarrow -\infty }R(s,f)=\lim_{s\rightarrow -\infty }R(s,g)$ for
every $f,g\in \mathcal{K}^{\circ }$, (v) $R^{+}(s,f)=\inf_{s^{\prime
}>s}R(s^{\prime },f)$, is upper semicontinuous in the second argument.
\end{definition}

\begin{proposition}
\label{unique} Any $\sigma (\mathcal{P}(\mathbb{R}),C_{b})$-lsc Risk Measure 
$\Phi :\mathcal{P}(\mathbb{R})\rightarrow \mathbb{R}\cup \left\{ \infty
\right\} $ can be represented as in \ref{11}. The function $R^{-}(t,f)$
given by \ref{1111} is unique in the class $\mathcal{R}^{\max}$.
\end{proposition}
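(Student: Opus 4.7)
The existence of the representation is already fully established by Proposition \ref{propvolle}, so only the uniqueness of $R^-$ within the class $\mathcal{R}^{\max}$ remains. My plan is to reduce to Theorem 2.19 of \cite{DK10}, whose setting matches ours once we take the ambient vector space to be $ca(\mathbb{R})$, the convex domain to be $\mathcal{P}(\mathbb{R})$ (which is $\sigma(ca,C_b)$-closed), the monotonicity cone to be $\mathcal{K}$ associated with first-order stochastic dominance, and the dual cone $\mathcal{K}^{\circ}=-C_b^-$ as stated in the paper.

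The first step is to verify that the function $R^-$ defined by (\ref{1111}), viewed as a function on $\mathbb{R}\times \mathcal{K}^{\circ}$ via $f\mapsto -f$, belongs to $\mathcal{R}^{\max}$. Property (i) is immediate from the fact that $R^-(\cdot,f)$ is the left inverse of the non-decreasing function $\gamma(\cdot,f)$. Property (iii) follows from the positive homogeneity $\gamma(m,\lambda f)=\lambda\gamma(m,f)$ for $\lambda>0$, which passes through the left inverse as the stated rescaling. Property (iv) reduces to the observation that $\lim_{s\to -\infty}R^-(s,f)=\inf\{m\mid \gamma(m,f)\ge -\infty\}=\inf_{Q\in\mathcal{P}}\Phi(Q)$, which is independent of $f$. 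Property (ii) (joint quasiconcavity) follows from the fact that $\gamma(m,f)$ is, by (\ref{123}), a supremum of linear functions in $f$ on each sublevel set, combined with monotonicity in $m$; standard manipulations then give quasiconcavity of its left inverse in the first argument jointly with the second. Property (v) (upper semicontinuity of $R^+$ in $f$) is obtained by the same supremum structure of $\gamma$: the set $\{f:\gamma(m,f)\ge t\}$ is closed in $C_b$ because $\gamma(m,\cdot)$ is a supremum of continuous linear functionals, hence lower semicontinuous, which dualizes to upper semicontinuity of the upper-right inverse.

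The second step is simply to invoke Theorem 2.19 of \cite{DK10}, which asserts that a monotone quasi-convex lsc map on a convex subset of a topological vector space admits a unique dual function in $\mathcal{R}^{\max}$ representing it in the form (\ref{11}). Combined with step one, this identifies our $R^-$ from (\ref{1111}) as that unique dual.

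The main obstacle I anticipate is verifying properties (ii) and (v) of $\mathcal{R}^{\max}$, since these mix the two variables and require using both the sup/inf structure of $\gamma$ and the interaction between the left-inverse operation and joint continuity/concavity properties. Properties (i), (iii), and (iv) are essentially bookkeeping. Once (i)--(v) are in hand, the uniqueness conclusion is a direct citation, and no further work on $\Phi$ itself is needed beyond the hypotheses already granted.
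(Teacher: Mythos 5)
Your overall strategy (reduce everything to Theorem 2.19 of \cite{DK10}) is the same as the paper's, but you skip the one step that actually carries the weight of the argument. Theorem 2.19 of \cite{DK10} is not simply a statement about quasi-convex lsc monotone maps on an arbitrary convex subset of a topological vector space: to apply it to a map defined only on $\mathcal{P}(\mathbb{R})$ rather than on all of $ca(\mathbb{R})$, one must check that $\Phi$ is \emph{continuously extensible} to $ca$ in the sense of Definition 2.13 of \cite{DK10}, i.e.\ that $\overline{\mathcal{A}^{m}+\mathcal{K}}\cap \mathcal{P}=\mathcal{A}^{m}$ for the acceptance sets $\mathcal{A}^{m}$ and the ordering cone $\mathcal{K}\subseteq ca_{+}$. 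The paper verifies this by observing that for $P\in \mathcal{P}$ and $0\neq \mu \in \mathcal{K}\subseteq ca_{+}$ one has $P+\mu \notin \mathcal{P}$ (the total mass exceeds one), so that $(\mathcal{A}^{m}+\mathcal{K})\cap \mathcal{P}=\mathcal{A}^{m}$ and the extensibility condition holds; only then does $\Phi$ admit an lsc extension to $ca$ and Theorem 2.19 become applicable. Your proposal asserts that ``the setting matches ours'' without this verification, which is precisely the point where a general convex subset could fail. A related, smaller omission: Theorem 2.19 delivers uniqueness in the class $\mathcal{R}_{\mathcal{P}}^{\max }$ (dual functions relative to the convex set), and one still has to identify $\mathcal{R}^{\max }=\mathcal{R}_{\mathcal{P}}^{\max }$, which the paper does by appealing to the argument at the end of the proof of Proposition 3.5 in \cite{DK10}.

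On the membership $R^{-}\in \mathcal{R}^{\max }$: your items (i), (iii), (iv) are fine, but you yourself flag (ii) and (v) as the difficulty and then dispose of them with ``standard manipulations,'' which is not a proof. The clean route, and the one the paper takes, is to invoke Lemma C.2 of \cite{DK10}, whose hypotheses are exactly that $\gamma (m,\cdot )$ is convex, positively homogeneous and lower semicontinuous in the second argument --- all immediate from the definition (\ref{123}) of $\gamma$ as a supremum of continuous linear functionals over the sublevel set $\{\Phi \leq m\}$. I would recommend restructuring your write-up around these two citations (Definition 2.13 plus Theorem 2.19 for uniqueness, Lemma C.2 for membership), with the extensibility check written out explicitly.
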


\begin{proof}
According to Definition 2.13 in \cite{DK10} a map $\Phi :\mathcal{P}%
\rightarrow \overline{R}$ is continuously extensible to $ca$ if 
\begin{equation*}
\overline{\mathcal{A}^{m}+\mathcal{K}}\cap \mathcal{P}=\mathcal{A}^{m}
\end{equation*}%
where $\mathcal{A}^{m}$ is acceptance set of level $m$ and $\mathcal{K}$ is
the ordering positive cone on $ca$. Observe that $\mu \in ca_{+}$ satisfies $%
\mu (E)\geq 0$ for every $E\in \mathcal{B}_{\mathbb{R}}$ so that $P+\mu
\notin \mathcal{P}$ for $P\in \mathcal{A}^{m}$ and $\mu \in \mathcal{K}$
except if $\mu =0$. \newline
For this reason the lsc map $\Phi $ admits a lower semicontinuous extension
to $ca$ and then Theorem 2.19 in \cite{DK10} applies and we get the
uniqueness in the class $\mathcal{R}_{\mathcal{P}}^{\max }$ (see Definition
2.17 in \cite{DK10}). In addition, $\mathcal{R}^{\max }=\mathcal{R}_{%
\mathcal{P}}^{\max }$  follows exactly by the same argument at the end of
the proof of Proposition 3.5 \cite{DK10}. Finally we notice that Lemma C.2
in \cite{DK10} implies that $R^{-}\in \mathcal{R}^{\max }$ since $\gamma
(m,f)$ is convex, positively homogeneous and lsc in the second argument.
\end{proof}

\subsection{Computation of the dual function}

The following proposition is useful to compute the dual function $R^{-}(t,f)$
for the examples considered in this paper.

\begin{proposition}
Let $\left\{ F_{m}\right\} _{m\in \mathbb{R}}$ be a feasible family and
suppose in addition that, for every $m$, $F_{m}(x)$ is increasing in $x$ and 
$\lim_{x\rightarrow +\infty }F_{m}(x)=1$. The associated map $\Phi :\mathcal{%
P}\rightarrow \mathbb{R}\cup \{+\infty \}$ defined in (\ref{phi1}) is well
defined, (Mon), (Qco) and $\sigma (\mathcal{P},C_{b})-$lsc and the
representation (\ref{11}) holds true with $R^{-}$ given in (\ref{1111}) and 
\begin{equation}
\gamma (m,f)=\int fdF_{-m}+F_{-m}(-\infty )f(-\infty ).  \label{gamma}
\end{equation}
\end{proposition}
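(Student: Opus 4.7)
The preliminary claims about $\Phi$ being well-defined, (Mon), (Qco) and $\sigma(\mathcal{P},C_b)$-lsc are immediate from Proposition \ref{feasible} applied to the feasible family $\{F_m\}$, and the representation (\ref{11}) together with the formula (\ref{1111}) for $R^-$ is exactly Proposition \ref{propvolle}. Hence everything reduces to establishing the closed form (\ref{gamma}) for $\gamma(m,f)$, which, in view of (\ref{11}), only needs to be checked on the relevant index set $C_b^-$.

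By Lemma \ref{L3}(3) the lower level set $\{Q \in \mathcal{P} : \Phi(Q) \leq m\}$ coincides with $\mathcal{A}^{-m} = \{Q : F_Q \leq F_{-m}\}$, so
\begin{equation*}
\gamma(m, f) = \sup\left\{\int f\, dQ : Q \in \mathcal{P},\ F_Q \leq F_{-m}\right\}.
\end{equation*}
The key observation is that $F_{-m}$ is only a \emph{sub}-distribution: by hypothesis $\lim_{x\to +\infty} F_{-m}(x)=1$, but $F_{-m}(-\infty)$ may be strictly positive. For decreasing $f$, one expects the supremum to be attained by matching $F_Q$ with $F_{-m}$ on the bulk and pushing the missing mass $F_{-m}(-\infty)$ all the way to $-\infty$; this is precisely what the formula encodes, with contribution $F_{-m}(-\infty) f(-\infty)$ from the phantom mass.

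For the upper bound, fix $f \in C_b^-$ and $Q \in \mathcal{A}^{-m}$, set $g := f - f(+\infty) \geq 0$ (continuous, decreasing, $g(+\infty)=0$) and let $\mu_g := -dg$ be the corresponding non-negative finite Borel measure, of total mass $g(-\infty)$. Since $g(x)=\mu_g((x,\infty))$, Fubini yields, for any right-continuous non-decreasing $F$,
\begin{equation*}
\int g\, dF = \int \bigl(F(y^-) - F(-\infty)\bigr)\, d\mu_g(y).
\end{equation*}
Applying this with $F=F_Q$ (where $F_Q(-\infty)=0$) and with $F=F_{-m}$, and using $F_Q(y^-)\leq F_{-m}(y^-)$, one gets $\int g\, dQ \leq \int g\, dF_{-m} + F_{-m}(-\infty)\, g(-\infty)$. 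Reinserting $f = g + f(+\infty)$ and exploiting $\int dQ = 1$ and $\int dF_{-m} = 1 - F_{-m}(-\infty)$, the constants $f(+\infty)$ cancel and one obtains
\begin{equation*}
\int f\, dQ \leq \int f\, dF_{-m} + F_{-m}(-\infty)\, f(-\infty) \qquad \forall\, Q \in \mathcal{A}^{-m}.
\end{equation*}

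For the matching lower bound, choose any sequence $a_n \downarrow -\infty$ and define $F_{Q_n}(x) := F_{-m}(x)\, \mathbf{1}_{[a_n,\infty)}(x)$. Right-continuity and monotonicity are inherited from $F_{-m}$, and $\lim_{x\to +\infty}F_{Q_n}(x)=1$ by the hypothesis on $F_{-m}$, while $\lim_{x\to -\infty}F_{Q_n}(x)=0$ by construction; thus $F_{Q_n}$ is a bona fide distribution function with $F_{Q_n}\leq F_{-m}$, so $Q_n \in \mathcal{A}^{-m}$. By construction $Q_n$ consists of an atom of mass $F_{-m}(a_n)$ at $a_n$ together with the restriction of $dF_{-m}$ to $(a_n,\infty)$, so
\begin{equation*}
\int f\, dQ_n = f(a_n) F_{-m}(a_n) + \int_{(a_n,\infty)} f\, dF_{-m} \longrightarrow f(-\infty) F_{-m}(-\infty) + \int f\, dF_{-m}
\end{equation*}
as $n \to \infty$, by monotone convergence (applied to $f-f(+\infty)\geq 0$). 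The main technical obstacle is the Fubini step: one must track the boundary contribution coming from $F_{-m}(-\infty)>0$ correctly, which is exactly what produces the additional summand $F_{-m}(-\infty)\, f(-\infty)$ in (\ref{gamma}).
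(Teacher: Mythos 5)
Your proof is correct. The reduction of the statement to the computation of $\gamma(m,f)$ for $f\in C_{b}^{-}$, the identification $\left\{ Q\mid \Phi (Q)\leq m\right\} =\mathcal{A}^{-m}=\left\{ Q\mid F_{Q}\leq F_{-m}\right\} $ via Lemma \ref{L3}, and the lower bound obtained from the truncated distributions $F_{Q_{n}}=F_{-m}\mathbf{1}_{[a_{n},+\infty )}$ are exactly the steps the paper takes (the paper uses $a_{n}=-n$). Where you genuinely diverge is the upper bound: the paper never writes down the pointwise inequality $\int fdQ\leq \int fdF_{-m}+F_{-m}(-\infty )f(-\infty )$ for all $Q\in \mathcal{A}^{-m}$. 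Instead it shows that $(Q_{n})$ is a maximizing sequence by picking an $\varepsilon $-optimizer $Q^{\varepsilon }$, truncating it to $Q_{n}^{\varepsilon }$, observing $F_{Q_{n}^{\varepsilon }}\leq F_{Q_{n}}$ and invoking the stochastic-dominance characterization (\ref{cb}) to get $\int fdQ_{n}\geq \int fdQ_{n}^{\varepsilon }\uparrow \int fdQ^{\varepsilon }>\gamma (m,f)-\varepsilon $; combined with $\int fdQ_{n}\leq \gamma (m,f)$ this pins down the limit. Your Fubini/layer-cake identity $\int g\,dF=\int \bigl(F(y^{-})-F(-\infty )\bigr)d\mu _{g}(y)$ with $\mu _{g}=-dg$ replaces that comparison argument by an explicit estimate valid for \emph{every} $Q\in \mathcal{A}^{-m}$, not merely along one sequence; it is slightly heavier on bookkeeping (the normalization $g=f-f(+\infty )$ and the mass count $\int dF_{-m}=1-F_{-m}(-\infty )$ both check out) but it makes transparent the mechanism by which the ``phantom mass'' $F_{-m}(-\infty )$ pushed to $-\infty $ produces the extra summand $F_{-m}(-\infty )f(-\infty )$ in (\ref{gamma}). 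Either route is complete; yours is marginally more self-contained since it does not rely on the monotonicity of $n\mapsto \int fdQ_{n}$ or on an auxiliary near-optimizer.
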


\begin{proof}
From equations (\ref{am}) and (\ref{am1}) we obtain:%
\begin{equation*}
\mathcal{A}^{-m}=\left\{ Q\in \mathcal{P}(\mathbb{R})\mid F_{Q}\leq
F_{-m}\right\} =\left\{ Q\in \mathcal{P}\mid \Phi (Q)\leq m\right\}
\end{equation*}%
so that 
\begin{equation*}
\gamma (m,f):=\sup_{Q\in \mathcal{P}}\left\{ \int fdQ\mid \Phi (Q)\leq
m\right\} =\sup_{Q\in \mathcal{P}}\left\{ \int fdQ\mid F_{Q}\leq
F_{-m}\right\} .
\end{equation*}%
Fix $m\in \mathbb{R}$, $f\in C_{b}^{-}$ and define the distribution function 
$F_{Q_{n}}(x)=F_{-m}(x)\mathbf{1}_{[-n,+\infty )}$ for every $n\in \mathbb{N}
$. Obviously $F_{Q_{n}}\leq F_{-m}$, $Q_{n}\downarrow $ and, taking into
account (\ref{cb}), $\int fdQ_{n}$ is increasing. For any $\varepsilon >0$,
let $Q^{\varepsilon }\in \mathcal{P}$ satisfy $F_{Q^{\varepsilon }}\leq
F_{-m}$ and $\int fdQ^{\varepsilon }>\gamma (m,f)-\varepsilon $. Then: $%
F_{Q_{n}^{\varepsilon }}(x):=F_{Q^{\varepsilon }}(x)\mathbf{1}_{[-n,+\infty
)}\uparrow F_{Q^{\varepsilon }}$, $F_{Q_{n}^{\varepsilon }}\leq F_{Q_{n}}$
and%
\begin{equation*}
\int fdQ_{n}\geq \int fdQ_{n}^{\varepsilon }\uparrow \int fdQ^{\varepsilon
}>\gamma (m,f)-\varepsilon .
\end{equation*}%
We deduce that $\int fdQ_{n}\uparrow \gamma (m,f)$ and, since 
\begin{equation*}
\int fdQ_{n}=\int_{-n}^{+\infty }fdF_{-m}+F_{-m}(-n)f(-n),
\end{equation*}%
we obtain (\ref{gamma}).
\end{proof}

\begin{example}
Computation of $\gamma (m,f)$ for the $\Lambda V@R$.

\noindent Let $m\in \mathbb{R}$ and $f\in C_{b}^{-}$. As $F_{m}(x)=\Lambda
(x)\mathbf{1}_{(-\infty ,m)}(x)+\mathbf{1}_{[m,+\infty )}(x),$ we compute
from (\ref{gamma}):%
\begin{equation}
\gamma (m,f)=\int_{-\infty }^{-m}fd\Lambda +(1-\Lambda (-m))f(-m)+\Lambda
(-\infty )f(-\infty ).  \label{gamma1}
\end{equation}%
We apply the integration by parts and deduce 
\begin{equation*}
\int_{-\infty }^{-m}\Lambda df=\Lambda (-m)f(-m)-\Lambda (-\infty )f(-\infty
)-\int_{-\infty }^{-m}fd\Lambda .
\end{equation*}%
We can now substitute in equation (\ref{gamma1}) and get: 
\begin{eqnarray}
\gamma (m,f) &=&f(-m)-\int_{-\infty }^{-m}\Lambda df=f(-\infty
)+\int_{-\infty }^{-m}(1-\Lambda )df,  \label{111} \\
R^{-}(t,f) &=&-H_{f}^{l}(t-f(-\infty )),  \label{222}
\end{eqnarray}
where $H_{f}^{l}$ is the left inverse of the function: $m\rightarrow
\int_{-\infty }^{m}(1-\Lambda )df$. 

As a particular case, we match the results obtained in \cite{DK10} for the $%
V@R$ and the Worst Case risk measure. Indeed, from (\ref{111}) and (\ref{222}%
) we get: $R^{-}\left( t,f\right) =-f^{l}\left( \frac{t-\lambda f(-\infty )}{%
1-\lambda }\right) $ if $\Lambda (x)=\lambda $;  $R^{-}\left( t,f\right)
=-f^{l}(t)$, if $\Lambda (x)=0$, where $f^{l}$ is the left inverse of $f$.

\bigskip

If $\Lambda $ is decreasing we may use Remark \ref{remarkDecr} to derive a
simpler formula for $\gamma $. Indeed, $\Lambda V@R(P)=\Lambda \widetilde{V}%
@R(P)$ where $\forall m\in \mathbb{R}$ 
\begin{equation*}
\widetilde{F}_{m}(x)=\Lambda (m)\mathbf{1}_{(-\infty ,m)}(x)+\mathbf{1}%
_{[m,+\infty )}(x)
\end{equation*}%
and so from (\ref{111}) 
\begin{equation*}
\gamma (m,f)=f(-\infty )+[1-\Lambda (-m)]\int_{-\infty }^{-m}df=[1-\Lambda
(-m)]f(-m)+\Lambda (-m)f(-\infty ).
\end{equation*}
\end{example}

\end{document}